\newcommand{\shortauthors}[1]{}
\newcommand{\keywords}[1]{\par\noindent\textbf{Keywords:} #1\par}
\DeclareMathOperator{\UCRY}{UCRY}
\definecolor{customgreen}{RGB}{0,150,0} 
\newtheorem{theorem}{Theorem}[section]
\newtheorem{corollary}[theorem]{Corollary}
\newtheorem{example}[theorem]{Example}
\newtheorem{lemma}[theorem]{Lemma}
\newtheorem{proposition}[theorem]{Proposition}
\newtheorem{remark}[theorem]{Remark}
\newcommand{\CC}{\mathsf{C}\mathsf{C}}
\newcommand{\rI}{\mathsf{I}}
\newcommand{\rR}{\mathsf{R}}
\newcommand{\rU}{\mathsf{U}}
\newcommand{\rA}{\mathsf{A}}
\newcommand{\rC}{\mathsf{C}}
\newcommand{\RY}{\mathrm{R}_y}
\newcommand{\CNOT}{\mathrm{CNOT}}
\numberwithin{equation}{section}
\renewcommand{\@biblabel}[1]{#1\hfill \hspace{-0.2cm}}
\begin{document}

\title{A Rigorous and Self-Contained Proof of the Grover–Rudolph State Preparation Algorithm}

\author{%
Antonio Falc\'o\\
Departamento de Matem\'aticas, F\'{\i}sica y Ciencias Tecnol\'ogicas\\
Universidad Cardenal Herrera-CEU, CEU Universities\\
San Bartolom\'e 55, 46115 Alfara del Patriarca (Valencia), Spain\\
\texttt{afalco@uchceu.es}
\and
Daniela Falc\'o--Pomares\\
Grupo de Investigaci\'on Bisite, Universidad de Salamanca\\
Calle Espejo s/n, 37007 Salamanca (Spain)
\and
Hermann G. Matthies\\
Institute of Scientific Computing\\
Technische Universit\"at Braunschweig, \\ 
Universit\"atsplatz 2, 38106 Braunschweig, Germany
}

\maketitle

\begin{abstract}
We give a rigorous and self-contained analysis of the Grover--Rudolph quantum state-preparation
algorithm, which encodes a probability distribution $\{p_k\}$ as an $n$-qubit amplitude state
$\sum_k\sqrt{p_k}\ket{k}$ via a hierarchy of controlled $\RY$ rotations determined by a dyadic
refinement of the target.
We formalize the dyadic probability tree, derive the trigonometric factorization of conditional
masses, and prove by induction that the circuit prepares exactly the desired measurement law.
We further prove that perturbing each rotation angle by at most $\eta$ changes the output
distribution by at most $\min(1,n\eta)$ in total variation, and combine this with a Hoeffding
concentration bound to obtain an explicit design rule: $b\ge\log_2(2n\pi/\varepsilon)$ bits and
$S\ge 2^{n+1}\log(2/\delta)/\varepsilon^2$ shots suffice to achieve accuracy $\varepsilon$
with confidence $1-\delta$.
As a circuit-theoretic complement, we provide an ancilla-free transpilation of each stage into
$\{\RY(\cdot),X,\CNOT\}$ via Gray-code ladders and a Walsh--Hadamard angle transform.
\end{abstract}

\keywords{Quantum state preparation; amplitude encoding; Grover--Rudolph algorithm; dyadic partitions;
uniformly controlled rotations; Gray code; ancilla-free transpilation; controlled rotations
\newline
\textbf{Mathematics Subject Classification:} 81P68 , 81P65}

\section{Introduction}

Preparing quantum states whose amplitudes encode classical data is a basic primitive in quantum
algorithm design. In particular, \emph{amplitude encoding} of probability distributions underlies
quantum Monte Carlo–type routines and amplitude-estimation methods, where expectation values or
integrals are converted into success probabilities and then estimated quadratically faster than
classically in idealized settings (see, e.g., the amplitude-estimation framework of
Brassard--H{\o}yer--Mosca--Tapp~\cite{Brassard2000}).

A particularly influential and widely reused state-preparation procedure is due to Grover and
Rudolph~\cite{GroverRudolph2002}. Given a probability distribution over $2^n$ outcomes (or, more generally, a nonnegative function on
$[0,1]$ sampled on a dyadic grid), their construction prepares an $n$-qubit state
$\sum_{z\in\mathbb{Z}_2^n}\sqrt{p_z}\,\ket{z}$ by recursively refining a dyadic partition and applying, at each
level, a family of controlled one-qubit rotations whose angles are defined from conditional masses on a
binary tree. The algorithm is conceptually elegant and has become a standard reference point for
state preparation from classical distributions.

Despite its popularity, the Grover--Rudolph procedure is often presented at a high level, with correctness
justified informally or with implicit assumptions about the dyadic refinement and the controlled-gate
semantics. The first objective of this paper is therefore \emph{mathematical}: we provide a rigorous and
self-contained treatment of the dyadic probability tree, the induced angle map, and the resulting
amplitude identities, culminating in a fully explicit proof that the circuit prepares the desired target
state. In particular, we isolate the key telescoping/trigonometric factorizations and prove them by
transparent induction, so that the overall correctness argument can be read independently of external
compilation folklore.

Grover and Rudolph's original note~\cite{GroverRudolph2002} introduced the recursive dyadic construction that we formalize here. Beyond this specific algorithm,
state-preparation and multiplexing problems have been extensively studied in the circuit-synthesis
literature. A prominent line of work concerns \emph{uniformly controlled one-qubit gates} (also called
\emph{multiplexed} or \emph{uniformly controlled rotations}), for which Gray-code techniques yield
systematic decompositions into elementary gates without ancillas. In particular, the Gray-code ladder
idea appears as a standard tool for implementing uniformly controlled rotations and related multiqubit
structures (see, e.g., the uniformly controlled gate framework in
Bergholm~\emph{et al.}~\cite{Bergholm2005} and the companion discussion of uniformly
controlled one-qubit gates in~\cite{Mottonen2004}).

Since the original Grover--Rudolph proposal, the state-preparation problem has continued to attract significant attention, with contributions addressing improved gate complexity for general and structured states.
Zhang, Li, and Yuan~\cite{Zhang2022} showed that any $n$-qubit state can be prepared with $\Theta(n)$-depth circuits at the cost of exponentially many ancilla qubits, and achieved $\Theta(\log(nd))$-depth for sparse states with $d$ nonzero amplitudes.
Marin-Sanchez, Gonzalez-Conde, and Sanz~\cite{MarinSanchez2023} developed quantum circuits that efficiently encode smooth functions through a dyadic-tree approach closely related to the Grover--Rudolph construction, making the connection between classical function approximation and amplitude encoding more explicit.
Gonzalez-Conde, Watts, Rodriguez-Grasa, and Sanz~\cite{GonzalezConde2024} extended this line by proposing efficient amplitude encoding for polynomial functions via block-encoding techniques.
Mao, Tian, and Sun~\cite{Mao2024} established a nearly optimal circuit size of $O(nd/\log(n+m)+n)$ for sparse state preparation using $m$ ancillary qubits, improving upon prior work by a factor of $\log d$.
More recently, Luo, Li, and Li~\cite{Luo2025} established an explicit depth-ancilla trade-off for sparse state preparation, achieving circuit depth $O\!\bigl(\frac{nd\log m}{m\log(m/n)}+\log nd\bigr)$ with $m\ge 6n$ ancilla qubits.
Ramacciotti, Lefterovici, and Rotundo~\cite{Ramacciotti2024} analyzed the Grover--Rudolph algorithm directly in the sparse regime, proving that the gate complexity is linear in the number of nonzero amplitudes and linear in the number of qubits after a simple modification.

The present paper differs from these works in three complementary ways.
First, our primary objective is \emph{mathematical rigour}: we provide a fully self-contained correctness
proof of the original Grover--Rudolph construction in the general dense, amplitude-encoded setting,
making explicit the dyadic refinement structure, the angle assignment, and the inductive amplitude
argument that is often left implicit in the literature.
Second, we prove a quantitative stability estimate for imperfectly implemented Grover--Rudolph angles,
showing how deterministic angle errors propagate to the final output distribution in total variation
distance.
Third, we integrate the circuit-synthesis perspective into the same rigorous framework: each
Grover--Rudolph stage is identified as a uniformly controlled \(\RY\) rotation, and an explicit
ancilla-free Gray-code transpilation is derived through a Walsh--Hadamard angle transform. Although
Gray-code decompositions are standard in quantum compilation, our contribution is to connect them
directly with the dyadic Grover--Rudolph hierarchy and to provide complete correctness proofs in this
specific setting.

The main contributions can be summarized as follows:
\begin{itemize}
\item A rigorous dyadic-tree formalization of the Grover--Rudolph angle construction, including explicit
      refinement identities and a clean inductive correctness proof of the prepared amplitudes.
\item A quantitative stability theorem for imperfectly implemented Grover--Rudolph angles, showing that
      if each angle is perturbed by at most \(\eta\), then the resulting output distribution changes by
      at most \(\min(1,n\eta)\) in total variation distance. This separates deterministic angle-synthesis
      errors from finite-shot statistical fluctuations.
\item A stage-wise circuit formulation showing that each Grover--Rudolph stage is a uniformly controlled
      one-qubit rotation, together with an explicit ancilla-free transpilation into
      \(\{\RY(\cdot),X,\CNOT\}\) using a Gray-code ladder and a Walsh--Hadamard angle transform, with
      full correctness proofs and implementable pseudo-code.
\item A systematic sensitivity study of the total-variation error as a function of angle-quantization
      precision \((b\in\{8,16,32\})\) and number of measurement shots \((S\in\{256,1024,4096\})\)
      for \(n\in\{2,3,4\}\) qubits, showing that shot noise dominates at practically relevant
      shot counts and that \(b=8\)-bit angle precision already reaches the shot-noise floor in the
      tested instances.
      These empirical findings are underpinned by a combined theoretical bound (Corollary~\ref{cor:combined_bound})
      that unifies the stability theorem with the Hoeffding inequality and yields an explicit
      design rule: $b\ge\log_2(2n\pi/\varepsilon)$ bits and $S\ge 2^{n+1}\log(2/\delta)/\varepsilon^2$
      shots suffice to guarantee $\mathrm{TV}(p,\hat p)\le\varepsilon$ with confidence $1-\delta$.
\end{itemize}

The paper is organized as follows.
Section~\ref{sec:prelim_main} fixes notation and collects the required preliminaries, including the
dyadic partition formalism for probability masses, the embedding conventions for one-qubit gates, and
the selective pattern-controlled gate construction used throughout; it also states the Grover--Rudolph
state-preparation circuit and the main theorem.
Section~\ref{sec:proof_GR} contains the rigorous and self-contained correctness proof, centered on the
trigonometric factorization of dyadic masses and the inductive amplitude formula that propagates across
Grover--Rudolph stages.
Section~\ref{sec:stability} proves the stability estimate for imperfect rotation angles and derives the
finite-precision consequence for angle synthesis, thereby separating deterministic synthesis error from
finite-shot sampling error.
In Section~\ref{sec:numerics} we provide a worked example together with an idealized noise-free
circuit-level verification of the construction, as well as a systematic sensitivity analysis of the
total-variation error as a function of angle-quantization precision and number of measurement shots.
Finally, Section~\ref{sec:transpilation_triangulum} develops the active-register viewpoint and gives an
explicit ancilla-free transpilation of each Grover--Rudolph stage into the elementary gate dictionary
\(\{\RY(\cdot),X,\CNOT(\cdot\to\cdot)\}\) via Gray-code ladder decompositions, including implementable
pseudo-code and circuit diagrams.

\section{Preliminary results and statement of the main theorem}\label{sec:prelim_main}

Throughout the paper, we write $\mathbb{Z}_2:=\{0,1\}$ and, for $n\ge 1$,
\[
\mathbb{Z}_{2^n}:=\{0,1,\dots,2^n-1\},\qquad N:=2^n.
\]
For a bitstring $\mathbf{z}=z_1\cdots z_m\in\mathbb{Z}_2^m$ we use the shorthand
$|\mathbf{z}\rangle:=|z_1\rangle\otimes\cdots\otimes|z_m\rangle$ for the corresponding
computational basis vector.

\subsection{Binary representations and dyadic partitions}\label{subsec:dyadic}

For $k\in\mathbb{Z}_{2^n}$, let $b_n(k)=z_1z_2\cdots z_n\in\mathbb{Z}_2^n$ denote the
(length-$n$) binary representation of $k$, i.e.
\[
k=b_n^{-1}(z_1z_2\cdots z_n) = z_1 2^0+z_2 2^1+\cdots+z_n 2^{n-1}.
\]
For each level $1\le \ell\le n$ and $0\le k\le 2^\ell$, define the dyadic grid points
\[
z^{(\ell)}_{b_\ell(k)}:=\frac{k}{2^\ell},
\qquad\text{with the convention } z^{(\ell)}_{b_\ell(2^\ell)}:=1.
\]
We then define the dyadic intervals
\[
I^{(\ell)}_{b_\ell(k)}:=\Big[z^{(\ell)}_{b_\ell(k)},\,z^{(\ell)}_{b_\ell(k+1)}\Big]
=\Big[\frac{k}{2^\ell},\,\frac{k+1}{2^\ell}\Big],
\qquad k\in\mathbb{Z}_{2^\ell}.
\]
At the finest level $\ell=n$, we write
\[
I_k:=I^{(n)}_{b_n(k)}=\Big[\frac{k}{2^n},\frac{k+1}{2^n}\Big],\qquad k\in\mathbb{Z}_{2^n}.
\]

\subsection{Quantum registers and one-qubit embeddings}\label{subsec:registers}

We consider the $n$-qubit Hilbert space
\[
\mathcal{H}_n:=(\mathbb{C}^2)^{\otimes n}\cong \mathbb{C}^{2^n},
\]
endowed with the computational basis $\{\,|b_n(k)\rangle: k\in\mathbb{Z}_{2^n}\,\}$.
Given a one-qubit unitary $U\in \mathrm{U}(2)$ and an index $1\le j\le n$, we define
the standard embedding (acting on the $j$-th qubit)
\[
w_j^{(n)}(U):=\rI_2^{\otimes (j-1)}\otimes U\otimes \rI_2^{\otimes (n-j)}\in \mathrm{U}(2^n).
\]

\subsection{Selective and controlled one-qubit operations}\label{subsec:controlled}

Fix $1\le \ell\le n$ and $U\in\mathrm{U}(2)$. For a pattern
$\mathbf{z}=z_1\cdots z_{\ell-1}z_{\ell+1}\cdots z_n\in\mathbb{Z}_2^{n-1}$,
we define the selective (rank-one supported) operator
\[
\rC^{(\ell)}_{\mathbf{z}}(U)
:=
|z_1\cdots z_{\ell-1}\rangle\langle z_1\cdots z_{\ell-1}|
\otimes U\otimes
|z_{\ell+1}\cdots z_n\rangle\langle z_{\ell+1}\cdots z_n|.
\]
This operator acts as $w_\ell^{(n)}(U)$ on basis states whose $(n-1)$ remaining bits
match $\mathbf{z}$, and annihilates all other basis states. From $\rC^{(\ell)}_{\mathbf{z}}(U)$
we build the corresponding controlled unitary (identity outside the selected branch)
\[
\CC^{(\ell)}_{\mathbf{z}}(U)
:=
\rC^{(\ell)}_{\mathbf{z}}(U)+
\sum_{\mathbf{z}'\in\mathbb{Z}_2^{n-1}\setminus\{\mathbf{z}\}}
\rC^{(\ell)}_{\mathbf{z}'}(\rI_2)
\;\in\;\mathrm{U}(2^n).
\]
In particular, $\CC^{(\ell)}_{\mathbf{z}}(U)$ applies $U$ to qubit $\ell$ if and only if the other
qubits match the control pattern $\mathbf{z}$, and otherwise acts as the identity.

\subsection{Rotation primitives and amplitude encoding objective}\label{subsec:rotations}

We will use the real rotation gate $\rR(\alpha)\in\mathrm{U}(2)$ defined by
\[
\rR(\alpha):=
\begin{pmatrix}
\cos\alpha & -\sin\alpha\\
\sin\alpha & \cos\alpha
\end{pmatrix},
\qquad \alpha\in\mathbb{R}.
\]
Given a nonnegative density $\varrho:[0,1]\to[0,\infty)$ with $\int_0^1\varrho(x)\,dx=1$,
we define the target probabilities
\[
p_k:=\int_{I_k}\varrho(x)\,dx,\qquad k\in\mathbb{Z}_{2^n}.
\]
The state-preparation (amplitude-encoding) objective is to construct a circuit $U\in\mathrm{U}(2^n)$
such that
\[
U|0\rangle^{\otimes n}=\sum_{k\in\mathbb{Z}_{2^n}}\sqrt{p_k}\,|b_n(k)\rangle,
\]
up to a global phase.

\subsection{Elementary identities for dyadic refinement}\label{subsec:binary-dyadic}

The binary representation $b_m$ and the dyadic grid points and intervals
$z^{(\ell)}_{b_\ell(k)}$, $I^{(\ell)}_{b_\ell(k)}$ are as defined in
Section~\ref{subsec:dyadic}; here we extend the index $m$ to an arbitrary positive integer.
The following two lemmas collect the shift and refinement identities used
throughout the proof of the main theorem.

\begin{lemma}[Binary shift identities]\label{lem:binary-shift}
Let $1\le \ell\le n-1$.
\begin{enumerate}
\item[\rm(i)] For $k\in \mathbb{Z}_{2^{\ell}}$:
\[
b_{\ell+1}(2k)=0\,b_\ell(k),\qquad
b_{\ell+1}(2k+1)=1\,b_\ell(k).
\]
\item[\rm(ii)] For $k\in \mathbb{Z}_{2^{\ell}}$ where $k \le 2^{\ell}-2:$
\[
b_{\ell+1}(2k+2)=0\,b_\ell(k+1).
\]
\end{enumerate}
\end{lemma}

\begin{proof}
Write $k=\sum_{j=1}^\ell z_j2^{j-1}$, i.e.\ $b_\ell(k)=z_1\cdots z_\ell$.
Then $2k=\sum_{j=1}^\ell z_j2^{j}$, hence the length-$(\ell+1)$ binary expansion of
$2k$ is $0z_1\cdots z_\ell=0\,b_\ell(k)$, proving the first identity in~(i).
Since $2k+1=1+\sum_{j=1}^\ell z_j2^{j}$, the leading bit is $1$ and the remaining
bits reproduce $b_\ell(k)$, giving $b_{\ell+1}(2k+1)=1\,b_\ell(k)$.
For~(ii), the hypothesis $k\le 2^\ell-2$ ensures $k+1\le 2^\ell-1$, so $b_\ell(k+1)$
is well-defined; the identity $b_{\ell+1}(2k+2)=0\,b_\ell(k+1)$ then follows by
applying the first identity in~(i) with $k$ replaced by $k+1$.
\end{proof}

\begin{corollary}[Dyadic refinement]\label{cor:dyadic-refinement}
Let $1\le \ell\le n-1$ and $k\in \mathbb{Z}_{2^{\ell}}$. Then
\[
I^{(\ell)}_{b_\ell(k)}
=
I^{(\ell+1)}_{\,b_{\ell+1}(2k)}\;\cup\; I^{(\ell+1)}_{\,b_{\ell+1}(2k+1)}
=
I^{(\ell+1)}_{\,0b_\ell(k)}\;\cup\; I^{(\ell+1)}_{\,1b_\ell(k)},
\]
and the union is disjoint up to the common endpoint.
Consequently, for any integrable $\varrho$,
\[
\int_{I^{(\ell)}_{b_\ell(k)}}\varrho(x)\,dx
=
\int_{I^{(\ell+1)}_{0b_\ell(k)}}\varrho(x)\,dx
+
\int_{I^{(\ell+1)}_{1b_\ell(k)}}\varrho(x)\,dx.
\]
\end{corollary}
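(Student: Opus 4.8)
The plan is to reduce the set identity to the closed-form endpoint formulas for dyadic intervals and then invoke additivity of the integral over adjacent intervals. First I would dispose of the second (relabeling) equality purely from Lemma~\ref{lem:binary-shift}: since $b_{\ell+1}(2k)=0\,b_\ell(k)$ and $b_{\ell+1}(2k+1)=1\,b_\ell(k)$, the intervals indexed by $b_{\ell+1}(2k)$ and $b_{\ell+1}(2k+1)$ are, as sets, literally the same objects as those indexed by $0b_\ell(k)$ and $1b_\ell(k)$. This half of the claim is a pure change of notation and requires no computation; the third shift identity $b_{\ell+1}(2k+2)=0\,b_\ell(k+1)$ will be used below to pin down the right endpoint.

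Next I would prove the union identity by substituting the closed form $I^{(m)}_{b_m(j)}=[\,j/2^{m},\,(j+1)/2^{m}\,]$ at levels $m=\ell$ and $m=\ell+1$. At the finer level one has $I^{(\ell+1)}_{b_{\ell+1}(2k)}=[\,2k/2^{\ell+1},\,(2k+1)/2^{\ell+1}\,]$ and $I^{(\ell+1)}_{b_{\ell+1}(2k+1)}=[\,(2k+1)/2^{\ell+1},\,(2k+2)/2^{\ell+1}\,]$. The simplifications $2k/2^{\ell+1}=k/2^\ell$ and $(2k+2)/2^{\ell+1}=(k+1)/2^\ell$ show that the left endpoint of the first subinterval and the right endpoint of the second coincide with the two endpoints of $I^{(\ell)}_{b_\ell(k)}$, while the shared value $(2k+1)/2^{\ell+1}$ is an interior point lying strictly between them. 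Hence the union is exactly $[\,k/2^\ell,\,(k+1)/2^\ell\,]=I^{(\ell)}_{b_\ell(k)}$, and the two subintervals overlap only at the single point $(2k+1)/2^{\ell+1}$, which is precisely the asserted disjointness up to the common endpoint.

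Finally I would deduce the integral identity from additivity of the Lebesgue integral: since the two subintervals meet only in a set of measure zero, for any integrable $\varrho$ the integral over $I^{(\ell)}_{b_\ell(k)}$ splits as the sum of the integrals over $I^{(\ell+1)}_{0b_\ell(k)}$ and $I^{(\ell+1)}_{1b_\ell(k)}$. I do not anticipate a genuine obstacle here: every step is either a relabeling sanctioned by Lemma~\ref{lem:binary-shift} or an elementary manipulation of the endpoint formulas. The only point demanding mild care is the boundary bookkeeping — confirming that the subintervals share exactly the single point $(2k+1)/2^{\ell+1}$ and nothing more, so that the measure-zero overlap is harmless for the additive splitting.
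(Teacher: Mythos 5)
Your proof is correct and follows exactly the route the paper intends: the paper states this corollary without an explicit proof, treating it as an immediate consequence of Lemma~\ref{lem:binary-shift} together with the endpoint formula $I^{(m)}_{b_m(j)}=[j/2^m,(j+1)/2^m]$, which is precisely the relabeling-plus-endpoint-arithmetic argument you give. Your additional care about the single shared point $(2k+1)/2^{\ell+1}$ and the measure-zero overlap for the integral splitting is a welcome explicit justification of what the paper leaves implicit.
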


\begin{lemma}[Suffix extraction via scaling]\label{lem:suffix-extraction}
Let $k\in \mathbb{Z}_{2^n}$ with $b_n(k)=z_1z_2\cdots z_n$.
For $1\le s\le n-1$ we have
\[
\Big\lfloor 2^{-s}k\Big\rfloor
=
\sum_{j=s+1}^n z_j\,2^{j-s-1},
\qquad\text{and hence}\qquad
b_{n-s}\!\Big(\lfloor 2^{-s}k\rfloor\Big)=z_{s+1}\cdots z_n.
\]
\end{lemma}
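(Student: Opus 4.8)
The plan is to exploit the little-endian convention fixed in the excerpt, under which $k=\sum_{j=1}^n z_j 2^{j-1}$, so that the \emph{low} bits $z_1,\dots,z_s$ carry the small powers $2^0,\dots,2^{s-1}$ and floor-division by $2^s$ should simply erase them. Accordingly, I would begin by splitting the sum at position $s$:
\[
k=\underbrace{\sum_{j=1}^{s} z_j 2^{j-1}}_{=:L}+\underbrace{\sum_{j=s+1}^{n} z_j 2^{j-1}}_{=:H}.
\]

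First I would record two elementary bounds. Since each $z_j\in\{0,1\}$, the low part satisfies $0\le L\le \sum_{j=1}^{s}2^{j-1}=2^s-1<2^s$, so $0\le 2^{-s}L<1$. For the high part, every exponent obeys $j-1\ge s$, hence $2^s\mid 2^{j-1}$ for each term, so $2^s\mid H$ and $2^{-s}H=\sum_{j=s+1}^n z_j 2^{j-s-1}\in\mathbb{Z}$.

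Combining these gives $2^{-s}k=2^{-s}H+2^{-s}L$ with $2^{-s}H$ an integer and $2^{-s}L\in[0,1)$, so by definition of the floor,
\[
\lfloor 2^{-s}k\rfloor=2^{-s}H=\sum_{j=s+1}^n z_j 2^{j-s-1},
\]
which is exactly the first claimed identity. For the second identity I would reindex by $i:=j-s$, so that the same integer reads $\sum_{i=1}^{n-s} z_{i+s}2^{i-1}$; matching this against the defining expansion $m=\sum_{i=1}^{n-s}w_i 2^{i-1}$ of $b_{n-s}(m)$ and invoking uniqueness of binary representations in length $n-s$ forces $w_i=z_{i+s}$, i.e.\ $b_{n-s}(\lfloor 2^{-s}k\rfloor)=z_{s+1}\cdots z_n$.

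I do not expect a genuine obstacle here; the only point demanding care is the endianness convention. Because the paper places the most significant bit last, the operation that extracts the suffix $z_{s+1}\cdots z_n$ is the integer right-shift $k\mapsto\lfloor 2^{-s}k\rfloor$, and it is essential to keep the exponent bookkeeping $2^{j-1}\to 2^{j-s-1}$ consistent with this ordering; a big-endian reading would instead relate the floor to a prefix and break the statement. The divisibility step ($2^s\mid H$) together with the strict bound $L<2^s$ is precisely what rigorously licenses dropping $L$ inside the floor, so I would make sure both are stated explicitly rather than treated as self-evident.
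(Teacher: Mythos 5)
Your proposal is correct and follows essentially the same route as the paper: both split $2^{-s}k$ into the integer part $\sum_{j=s+1}^n z_j 2^{j-s-1}$ and a fractional remainder in $[0,1)$, then drop the latter under the floor. Your version merely makes the divisibility of the high part and the uniqueness of the length-$(n-s)$ binary expansion explicit, which the paper leaves implicit.
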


\begin{proof}
Using $k=\sum_{j=1}^n z_j2^{j-1}$,
\[
2^{-s}k=\sum_{j=1}^s z_j2^{j-s-1}+\sum_{j=s+1}^n z_j2^{j-s-1}.
\]
The first sum belongs to $[0,1)$, so taking the integer part yields the second sum,
which is exactly the claimed binary value.
\end{proof}

\subsection{Measurement probabilities and statement of the main theorem}\label{subsec:born-main}

We briefly recall how probabilities are extracted from a quantum state, since this is the
criterion used to certify the correctness of the Grover--Rudolph construction.

A (mixed) quantum state is represented by a density matrix
\[
\rho \in \mathbb{M}_N(\mathbb{C}),\qquad \rho\succeq 0,\qquad \tr(\rho)=1,
\]
and pure states correspond to rank-one projectors $\rho=\ket{\psi}\bra{\psi}$ with $\|\psi\|=1$.
Given a unitary $U\in\mathrm{U}(N)$ and an input state $\rho_0$, the output state is
\[
\rho = U\rho_0 U^\star.
\]
For the computational-basis measurement, define the orthogonal projectors
\[
\Pi_k := \ket{b_n(k)}\bra{b_n(k)},\qquad k\in\mathbb{Z}_{2^n},
\qquad
\sum_{k\in\mathbb{Z}_{2^n}}\Pi_k = I_N.
\]
The associated outcome is a discrete random variable $\rA$ taking values in $\mathbb{Z}_{2^n}$, and
the Born rule yields the probability law
\begin{equation}\label{eq:born}
\mathbb{P}_{\rho}(\rA=k) = \tr(\rho\,\Pi_k),\qquad k\in\mathbb{Z}_{2^n}.
\end{equation}
In particular, if $\rho=\ket{\psi}\bra{\psi}$ is pure then
\[
\mathbb{P}_{\rho}(\rA=k) = |\braket{b_n(k)}{\psi}|^2.
\]

\begin{theorem}[Grover--Rudolph state preparation]\label{theorem:Grover-Rudolph}
Let $n\ge 1$ and set $N:=2^n$. Let $\varrho:[0,1]\to[0,\infty)$ satisfy $\int_0^1\varrho(x)\,dx=1$
and define $\{p_k\}_{k\in\mathbb{Z}_{2^n}}$ by $p_k=\int_{I_k}\varrho(x)\,dx$.
Then there exists a quantum circuit $U\in\mathrm{U}(N)$, constructed as a product of
multi-controlled single-qubit rotations, such that for $\rho_0=\ket{0}^{\otimes n}\bra{0}^{\otimes n}$
and $\rho:=U\rho_0U^\star$ one has
\[
\mathbb{P}_{\rho}(\rA=k)=\tr(\rho\,\Pi_k)=p_k,\qquad \forall\,k\in\mathbb{Z}_{2^n}.
\]
Moreover, the circuit can be chosen with $N-1$ elementary controlled-rotation gates.
\end{theorem}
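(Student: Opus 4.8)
The plan is to prove the theorem by an explicit inductive construction on the dyadic tree, turning each level of refinement into one uniformly controlled rotation. First I would introduce the \emph{node masses}
\[
m^{(\ell)}_{b_\ell(k)}:=\int_{I^{(\ell)}_{b_\ell(k)}}\varrho(x)\,dx,\qquad 0\le k<2^\ell,\ 0\le \ell\le n,
\]
with root $m^{(0)}:=\int_0^1\varrho=1$ and leaves $m^{(n)}_{b_n(k)}=p_k$. Corollary~\ref{cor:dyadic-refinement} gives the parent--child relation $m^{(\ell)}_{\mathbf z}=m^{(\ell+1)}_{0\mathbf z}+m^{(\ell+1)}_{1\mathbf z}$ for every $\mathbf z\in\mathbb{Z}_2^\ell$, where, by Lemma~\ref{lem:binary-shift}, the two children of $\mathbf z=b_\ell(k)$ are the left half $0\mathbf z=b_{\ell+1}(2k)$ and the right half $1\mathbf z=b_{\ell+1}(2k+1)$. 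In particular $0\le m^{(\ell+1)}_{0\mathbf z}\le m^{(\ell)}_{\mathbf z}$, which is what makes the angle map below well posed.

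Next I would define the \emph{angle map}: for each node $\mathbf z\in\mathbb{Z}_2^\ell$ with $m^{(\ell)}_{\mathbf z}>0$ set
\[
\alpha^{(\ell)}_{\mathbf z}:=\arccos\sqrt{\,m^{(\ell+1)}_{0\mathbf z}/m^{(\ell)}_{\mathbf z}\,}\in[0,\pi/2],
\]
and put $\alpha^{(\ell)}_{\mathbf z}:=0$ on zero-mass nodes (which will never carry amplitude). The refinement identity yields the two trigonometric splitting identities $\sqrt{m^{(\ell)}_{\mathbf z}}\cos\alpha^{(\ell)}_{\mathbf z}=\sqrt{m^{(\ell+1)}_{0\mathbf z}}$ and $\sqrt{m^{(\ell)}_{\mathbf z}}\sin\alpha^{(\ell)}_{\mathbf z}=\sqrt{m^{(\ell+1)}_{1\mathbf z}}$, which are the engine of the induction. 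The circuit is then $U:=U_n\cdots U_2U_1$, where $U_{\ell+1}$ is the uniformly controlled rotation that, conditioned on the already-fixed bits equalling $\mathbf z$, applies $\rR(\alpha^{(\ell)}_{\mathbf z})$ to the qubit $q_{\ell+1}$ assigned to level $\ell+1$; formally $U_{\ell+1}=\prod_{\mathbf z}\CC^{(q_{\ell+1})}_{\mathbf z}\big(\rR(\alpha^{(\ell)}_{\mathbf z})\big)$ in the notation of Subsection~\ref{subsec:controlled}, extending each $\ell$-bit pattern $\mathbf z$ to the $(n-1)$-bit control convention by fixing the untouched qubits to $0$. Since distinct control patterns are supported on orthogonal subspaces, the factors commute and $U_{\ell+1}$ is well defined.

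The core is the inductive amplitude formula: after $U_\ell\cdots U_1$ the state equals $\sum_{\mathbf z\in\mathbb{Z}_2^\ell}\sqrt{m^{(\ell)}_{\mathbf z}}\,\ket{e_\ell(\mathbf z)}$, where $\ket{e_\ell(\mathbf z)}$ is the computational-basis state carrying the pattern $\mathbf z$ on the $\ell$ qubits fixed so far and $\ket0$ elsewhere. The base case $\ell=0$ is $\ket0^{\otimes n}$ with $m^{(0)}=1$. For the step, $U_{\ell+1}$ acts on a qubit currently in $\ket0$, and on the branch $\mathbf z$ the rotation sends $\ket0\mapsto\cos\alpha^{(\ell)}_{\mathbf z}\ket0+\sin\alpha^{(\ell)}_{\mathbf z}\ket1$; multiplying the branch amplitude $\sqrt{m^{(\ell)}_{\mathbf z}}$ and invoking the two splitting identities yields exactly amplitudes $\sqrt{m^{(\ell+1)}_{0\mathbf z}}$ and $\sqrt{m^{(\ell+1)}_{1\mathbf z}}$ on the children $0\mathbf z,1\mathbf z$, which is the formula at level $\ell+1$ (zero-mass branches carry amplitude $0$, so the choice $\alpha=0$ there is harmless). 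At $\ell=n$ this reads $U\ket0^{\otimes n}=\sum_{k\in\mathbb{Z}_{2^n}}\sqrt{p_k}\,\ket{b_n(k)}=:\ket\psi$, whence the Born rule~\eqref{eq:born} gives $\mathbb{P}_\rho(\rA=k)=|\braket{b_n(k)}{\psi}|^2=p_k$. Finally, level $\ell+1$ contributes $2^{\ell}$ controlled rotations (one per pattern $\mathbf z\in\mathbb{Z}_2^\ell$), so the total is $\sum_{\ell=0}^{n-1}2^{\ell}=2^n-1=N-1$.

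I expect the main obstacle to be the \emph{index bookkeeping}: matching the order in which the dyadic tree appends refinement digits to the order in which qubits are fixed, and confirming that the already-fixed bits at level $\ell+1$ are precisely the coordinates recorded by $\mathbf z=b_\ell(k)$. Lemma~\ref{lem:binary-shift} and Lemma~\ref{lem:suffix-extraction} are exactly the tools that pin down this correspondence (which child lies on which side, and which qubit index carries the level-$\ell$ digit), so that the branch labels in the induction genuinely coincide with the computational-basis labels $b_n(k)$. The only other delicate point, the well-definedness of the angle map, is dispatched by the inequality $m^{(\ell+1)}_{0\mathbf z}\le m^{(\ell)}_{\mathbf z}$ together with the observation that zero-mass branches never acquire amplitude.
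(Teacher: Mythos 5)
Your proposal is correct and follows essentially the same route as the paper: the same dyadic node masses, the same angle map $\arccos\sqrt{p_{0\mathbf z}/p_{\mathbf z}}$, the same stage-wise pattern-controlled rotations on a fresh qubit, the same zero-mass convention, and the same gate count $\sum_{\ell}2^{\ell}=N-1$. The only organizational difference is that you carry the single invariant ``the amplitude on branch $\mathbf z$ equals $\sqrt{m^{(\ell)}_{\mathbf z}}$'' through one induction, whereas the paper factors this into a trigonometric factorization of the masses (Proposition~\ref{prop:trig_factorization}) plus a separate inductive amplitude formula (Lemma~\ref{lem:inductive_state}) combined at the end via the Born rule; the two presentations are equivalent.
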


\section{Proof of the Grover--Rudolph Theorem}\label{sec:proof_GR}

Throughout this section we set $N=2^n$ and we work in $\mathbb{M}_N(\mathbb{C})$ with the
computational basis $\{\ket{b_n(k)}:k\in\mathbb{Z}_{2^n}\}$, initial state
$\rho_0=\ket{b_n(0)}\bra{b_n(0)}=\ket{0}^{\otimes n}\bra{0}^{\otimes n}$, and the measurement
random variable $\rA$ associated with the projectors $\ket{b_n(k)}\bra{b_n(k)}$.
For a bit $z\in\mathbb{Z}_2$ and an angle $x\in\mathbb{R}$ we use
\[
\mathsf{T}_z(x) := (\cos x)^{1-z}(\sin x)^z .
\]

\subsection{Dyadic integrals and trigonometric factorization}\label{subsec:factorization}

Let $\varrho:[0,1]\to[0,\infty)$ be a density, i.e.\ $\int_0^1 \varrho(x)\,dx=1$.
For $m\ge 1$ and a binary word $\mathbf{w}\in\mathbb{Z}_2^{m}$, let $k(\mathbf{w})\in\mathbb{Z}_{2^m}$ be the unique
integer such that $b_m(k(\mathbf{w}))=\mathbf{w}$ (equivalently, if $\mathbf{w}=z_1\cdots z_m$ then $k(\mathbf{w})=\sum_{r=1}^{m} z_r2^{r-1}$).
We associate to $\mathbf{w}$ the dyadic interval of length $2^{-m}$
\[
I_{\mathbf{w}}:=\Big[\frac{k(\mathbf{w})}{2^{m}},\,\frac{k(\mathbf{w})+1}{2^{m}}\Big]\subset[0,1],
\]
and define its probability mass by
\[
p_{\mathbf{w}}:=\int_{I_w}\varrho(x)\,dx.
\]
In particular, if $k\in\mathbb{Z}_{2^n}$ and $b_n(k)=\mathbf{w}=z_1\cdots z_n$, then $k(\mathbf{w})=k$ and thus
\[
p_{z_1\cdots z_n}=p_{\mathbf{w}}=\int_{I_w}\varrho(x)\,dx=\int_{k/2^n}^{(k+1)/2^n}\varrho(x)\,dx.
\]
By Corollary~\ref{cor:dyadic-refinement}, every dyadic interval indexed by a word
$\mathbf{w}\in\mathbb{Z}_2^{m}$ with $m<n$ splits into its two children $0\mathbf{w}$ and $1\mathbf{w}$. Therefore, for the associated
probability masses we have the refinement identity
\[
p_{\mathbf{w}}=\int_{I^{(m)}_{w}}\varrho(x)\,dx
=\int_{I^{(m+1)}_{0w}}\varrho(x)\,dx+\int_{I^{(m+1)}_{1w}}\varrho(x)\,dx
= p_{0\mathbf{w}}+p_{1\mathbf{w}}.
\]

\medskip
\noindent\textbf{Angle assignment.}
For each word \(\mathbf w\in\mathbb Z_2^m\), \(0\le m\le n-1\), with
\(p_{\mathbf w}>0\), we define \(\theta_{\mathbf w}\in[0,\pi/2]\) by
\[
\cos^2\theta_{\mathbf w}
=
\frac{p_{0\mathbf w}}{p_{\mathbf w}},
\qquad
\sin^2\theta_{\mathbf w}
=
\frac{p_{1\mathbf w}}{p_{\mathbf w}}.
\]
This is well defined because
\(p_{\mathbf w}=p_{0\mathbf w}+p_{1\mathbf w}\). If
\(p_{\mathbf w}=0\), then \(p_{0\mathbf w}=p_{1\mathbf w}=0\), and the
choice of \(\theta_{\mathbf w}\) is immaterial for the prepared state;
we set \(\theta_{\mathbf w}:=0\) for definiteness.

For the empty word \(\emptyset\), we write
\(\theta:=\theta_{\emptyset}\). This is the stage-\(1\) angle
corresponding to the top-level dyadic split, namely
\[
\theta
=
\theta_{\emptyset}
:=
\arccos
\sqrt{\frac{p_0}{p_{\emptyset}}}
=
\arccos
\sqrt{\int_0^{1/2}\varrho(x)\,dx},
\]
where \(p_{\emptyset}=\int_0^1\varrho(x)\,dx=1\).

\begin{proposition}[Trigonometric factorization]\label{prop:trig_factorization}
For each $k\in\mathbb{Z}_{2^n}$ with $b_n(k)=z_1\cdots z_n$ one has
\begin{equation}\label{eq:factorization}
p_{z_1\cdots z_n}
=
\mathsf{T}_{z_1}^2(\theta_{z_2\cdots z_n})\,
\mathsf{T}_{z_2}^2(\theta_{z_3\cdots z_n})\cdots
\mathsf{T}_{z_{n-1}}^2(\theta_{z_n})\,
\mathsf{T}_{z_n}^2(\theta).
\end{equation}
\end{proposition}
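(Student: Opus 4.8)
The plan is to factor the whole problem through a single \emph{multiplicative} defining identity for the angle map, so that the product collapses by telescoping without ever dividing by a mass that might vanish. First I would record the trivial evaluation $\mathsf{T}_0^2(x)=\cos^2 x$ and $\mathsf{T}_1^2(x)=\sin^2 x$, so that
\[
\mathsf{T}_z^2(\theta_w)=(\cos^2\theta_w)^{1-z}(\sin^2\theta_w)^z
\]
simply selects the cosine branch when $z=0$ and the sine branch when $z=1$.

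The crux is the following key lemma: for every word $w$ with $0\le |w|\le n-1$ and every bit $z\in\mathbb{Z}_2$,
\[
\mathsf{T}_z^2(\theta_w)\,p_{w}=p_{zw}.
\]
I would prove this by splitting on the value of $p_w$. When $p_w>0$, the angle assignment gives $\mathsf{T}_0^2(\theta_w)=p_{0w}/p_w$ and $\mathsf{T}_1^2(\theta_w)=p_{1w}/p_w$, i.e.\ $\mathsf{T}_z^2(\theta_w)=p_{zw}/p_w$, and multiplying through by $p_w$ yields the claim. When $p_w=0$, the refinement identity $p_w=p_{0w}+p_{1w}$ together with nonnegativity forces $p_{0w}=p_{1w}=0$, hence $p_{zw}=0$; the left-hand side is likewise $0$ because it carries the factor $p_w=0$. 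Thus the identity holds in all cases, and \emph{it is division-free}, which is what makes the subsequent argument uniform.

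With this lemma in hand, I would establish \eqref{eq:factorization} by downward induction on the suffix length. The inductive claim is that for every $0\le j\le n$,
\[
\prod_{i=j+1}^{n}\mathsf{T}_{z_i}^2\!\big(\theta_{z_{i+1}\cdots z_n}\big)=p_{z_{j+1}\cdots z_n}.
\]
The base case $j=n$ is the empty product, equal to $1=p_{\emptyset}$ since $p_{\emptyset}=\int_0^1\varrho=1$. For the step from $j$ to $j-1$ I peel off the leading factor $\mathsf{T}_{z_j}^2(\theta_{z_{j+1}\cdots z_n})$ and apply the key lemma with $w=z_{j+1}\cdots z_n$ and $z=z_j$, converting $\mathsf{T}_{z_j}^2(\theta_{z_{j+1}\cdots z_n})\cdot p_{z_{j+1}\cdots z_n}$ into $p_{z_j z_{j+1}\cdots z_n}$. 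Taking $j=0$ gives exactly \eqref{eq:factorization}; the innermost factor $\mathsf{T}_{z_n}^2(\theta)$ is the instance $w=\emptyset$, where $p_{\emptyset}=1$ recovers $\mathsf{T}_{z_n}^2(\theta)=p_{z_n}$.

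The step I expect to be the main obstacle is the degenerate regime of vanishing masses. If one writes the product naively as a telescoping chain of ratios $p_{z_1\cdots z_n}/p_{z_2\cdots z_n}\cdots$, the argument breaks down as soon as an intermediate mass vanishes, producing an undefined $0/0$. The multiplicative form of the key lemma is precisely what absorbs this: it is valid even when $p_w=0$, so the induction proceeds with no case analysis at the product level. I would also note the harmless monotonicity $p_{z_1\cdots z_n}\le p_{z_2\cdots z_n}\le\cdots\le p_{z_n}\le 1$, which follows from $p_w=p_{0w}+p_{1w}$ and shows that whenever the target mass $p_{z_1\cdots z_n}$ is strictly positive the entire chain is positive and the literal ratio-telescoping is already valid; the lemma is needed only to dispose of the vanishing case cleanly.
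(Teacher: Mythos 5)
Your proposal is correct and follows essentially the same route as the paper: both arguments run a backward induction on the suffix length and dispose of the degenerate case $p_w=0$ by observing that the refinement identity $p_w=p_{0w}+p_{1w}$ together with nonnegativity forces both children masses to vanish. The only difference is presentational --- you extract the division-free identity $\mathsf{T}_z^2(\theta_w)\,p_w=p_{zw}$ as a standalone lemma and start the induction at the empty word, whereas the paper folds the same two-case analysis directly into its inductive step --- but the mathematical content is identical.
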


\begin{proof}
Fix $k\in\mathbb{Z}_{2^n}$ and write $b_n(k)=z_1\cdots z_n$.
For $r\in\{1,\dots,n\}$ let $\mathbf{w}_r:=z_r z_{r+1}\cdots z_n$ denote the length-$(n-r+1)$ suffix,
and set $\mathbf{w}_{n+1}:=\emptyset$. We prove by backward induction on $r$ that
\begin{equation}\label{eq:ind_claim_general}
p_{\mathbf{w}_r}
=
\mathsf{T}_{z_r}^2(\theta_{\mathbf{w}_{r+1}})\,
\mathsf{T}_{z_{r+1}}^2(\theta_{\mathbf{w}_{r+2}})\cdots
\mathsf{T}_{z_{n-1}}^2(\theta_{\mathbf{w}_n})\,
\mathsf{T}_{z_n}^2(\theta),
\end{equation}
where $\theta=\theta_{\emptyset}$ and, for any word $\mathbf{w}$, the angle $\theta_{\mathbf{w}}$ is defined by the convention
used in the manuscript (in particular, if $p_{\mathbf{w}}=0$ we set $\theta_{\mathbf{w}}:=0$).

\emph{Base case ($r=n$).}
We must show $p_{z_n}=\mathsf{T}_{z_n}^2(\theta)$. This holds by the definition of $\theta=\theta_{\emptyset}$:
\[
p_0=p_{\emptyset}\cos^2\theta=\cos^2\theta=\mathsf{T}_0^2(\theta),\qquad
p_1=p_{\emptyset}\sin^2\theta=\sin^2\theta=\mathsf{T}_1^2(\theta),
\]
since $p_{\emptyset}=\int_0^1\varrho(x)\,dx=1$.

\emph{Inductive step.}
Assume \eqref{eq:ind_claim_general} holds for $r+1$, and set $\mathbf{w}:=\mathbf{w}_{r+1}=z_{r+1}\cdots z_n$.
We prove \eqref{eq:ind_claim_general} for $r$. There are two cases.

\smallskip
\noindent\textbf{Case 1: $p_{\mathbf{w}}=0$.}
By Corollary~\ref{cor:dyadic-refinement} (Dyadic refinement), $p_{0\mathbf{w}}+p_{1\mathbf{w}}=p_{\mathbf{w}}=0$, and since
all probabilities are nonnegative it follows that $p_{0\mathbf{w}}=p_{1\mathbf{w}}=0$. In particular,
$p_{\mathbf{w}_r}=p_{z_r \mathbf{w}}=0$. On the other hand, our convention $\theta_{\mathbf{w}}=0$ implies that
$\mathsf{T}_{z_r}^2(\theta_{\mathbf{w}})\in\{0,1\}$, and the right-hand side of \eqref{eq:ind_claim_general} contains
the factor $\mathsf{T}_{z_r}^2(\theta_{\mathbf{w}})$ multiplying the remaining terms. But the remaining product
equals $p_{\mathbf{w}}$ by the induction hypothesis applied at level $r+1$ (since its left-hand side is $p_{\mathbf{w}}$),
and therefore the whole right-hand side equals $\mathsf{T}_{z_r}^2(\theta_{\mathbf{w}})\,p_{\mathbf{w}}=0$.
Hence \eqref{eq:ind_claim_general} holds.

\smallskip
\noindent\textbf{Case 2: $p_{\mathbf{w}}>0$.}
By definition of $\theta_{\mathbf{w}}$ we have
\[
p_{0\mathbf{w}}=p_{\mathbf{w}}\cos^2\theta_{\mathbf{w}}=p_{\mathbf{w}}\,\mathsf{T}_0^2(\theta_{\mathbf{w}}),
\qquad
p_{1\mathbf{w}}=p_{\mathbf{w}}\sin^2\theta_{\mathbf{w}}=p_{\mathbf{w}}\,\mathsf{T}_1^2(\theta_{\mathbf{w}}).
\]
Therefore, for $z_r\in\{0,1\}$,
\[
p_{\mathbf{w}_r}=p_{z_r \mathbf{w}}=p_{\mathbf{w}}\,\mathsf{T}_{z_r}^2(\theta_{\mathbf{w}}).
\]
Applying the induction hypothesis at level $r+1$ to $p_{\mathbf{w}}=p_{\mathbf{w}_{r+1}}$ yields
\[
p_{\mathbf{w}_r}
=\mathsf{T}_{z_r}^2(\theta_{\mathbf{w}_{r+1}})\,
\mathsf{T}_{z_{r+1}}^2(\theta_{\mathbf{w}_{r+2}})\cdots
\mathsf{T}_{z_{n-1}}^2(\theta_{\mathbf{w}_n})\,
\mathsf{T}_{z_n}^2(\theta),
\]
which is exactly \eqref{eq:ind_claim_general}.

\smallskip
This completes the induction. Taking $r=1$ proves \eqref{eq:factorization}.
\end{proof}

\subsection{Controlled rotations and state preparation}\label{subsec:controlled_rotations}

For $\alpha\in\mathbb{R}$ let
\[
\rR(\alpha):=\begin{bmatrix}\cos\alpha&-\sin\alpha\\ \sin\alpha&\cos\alpha\end{bmatrix}\in \mathrm{U}(2).
\]
For $j\in\{2,\dots,n\}$ and a binary word $\mathbf{w}\in\mathbb{Z}_2^{j-1}$, we denote by
\[
\CC_{\mathbf{w}}^{(1)}\bigl(\rR(\theta_{\mathbf{w}})\bigr)\in \mathrm{U}(2^j)
\]
the multi-controlled one-qubit rotation acting on the \emph{first} qubit, conditioned on the last
$(j-1)$ qubits being equal to $\mathbf{w}$, and acting as the identity on all other basis states. Equivalently,
$\CC_{\mathbf{w}}^{(1)}(\rR(\theta_{\mathbf{w}}))$ applies $\rR(\theta_{\mathbf{w}})$ to qubit $1$ if and only if the control register
(qubits $2,\dots,j$ in the $j$-qubit active subregister, or, in the full $n$-qubit register, the last
$(j-1)$ wires) matches the pattern $\mathbf{w}$.

Define the stage unitaries by
\[
\rU_1 := \rI_{2^{n-1}}\otimes \rR(\theta),
\qquad
\rU_j := \prod_{\mathbf{w}\in\mathbb{Z}_2^{j-1}}\Bigl(\rI_{2^{n-j}}\otimes \CC_{\mathbf{w}}^{(1)}\bigl(\rR(\theta_{\mathbf{w}})\bigr)\Bigr),
\quad 2\le j\le n.
\]
In particular, only the gates $\CC_{\mathbf{w}}^{(1)}(\rR(\theta_{\mathbf{w}}))$ (with a fixed target qubit) are required for
the inductive construction, while $\rU_1$ is an uncontrolled rotation on the last qubit.

Finally, we define the full circuit
\[
\rU := \rU_n\rU_{n-1}\cdots \rU_1 \in \mathrm{U}(2^n).
\]
By construction, $\rU_j$ is a product of $2^{j-1}$ elementary controlled-rotation gates, hence the
total number of elementary gates is
\[
\sum_{j=1}^n 2^{j-1} = 2^n-1 = N-1,
\]
i.e. $\rU$ has circuit length $N-1$.

\begin{lemma}[Inductive amplitude form]\label{lem:inductive_state}
Let $\ket{\psi_j}:=\rU_j\rU_{j-1}\cdots \rU_1\ket{0}^{\otimes n}$.
Then, for every $1\le j\le n$,
\[
\ket{\psi_j}
=
\ket{0}^{\otimes (n-j)}\otimes
\sum_{\mathbf{w}=z_{n-j+1}\cdots z_n\in\mathbb{Z}_2^{j}}
\Bigl(\mathsf{T}_{z_{n-j+1}}(\theta_{z_{n-j+2}\cdots z_n})\cdots
\mathsf{T}_{z_{n-1}}(\theta_{z_n})\mathsf{T}_{z_n}(\theta)\Bigr)\,
\ket{\mathbf{w}}.
\]
In particular,
\[
\ket{\psi_n}
=
\sum_{z_1\cdots z_n\in\mathbb{Z}_2^n}
\Bigl(\mathsf{T}_{z_1}(\theta_{z_2\cdots z_n})\cdots \mathsf{T}_{z_n}(\theta)\Bigr)\ket{z_1\cdots z_n}.
\]
\end{lemma}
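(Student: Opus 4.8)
The natural approach is induction on the stage index $j$, tracking how each newly applied stage unitary $\rU_j$ transforms the amplitudes of the previous state. Throughout I would keep the tensor bookkeeping fully explicit: in $\ket{\psi_j}$ the first $n-j$ wires remain ``inactive'' (still in $\ket{0}$), while the last $j$ wires carry the nontrivial amplitudes recorded by the product of $\mathsf{T}$-factors. The entire argument is essentially a single observation — that $\rU_j$ is a branch-wise rotation acting on a wire currently in $\ket{0}$ — applied $n$ times.

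For the base case $j=1$ I would compute directly
\[
\ket{\psi_1}=\rU_1\ket{0}^{\otimes n}=\bigl(\rI_{2^{n-1}}\otimes\rR(\theta)\bigr)\ket{0}^{\otimes n}=\ket{0}^{\otimes(n-1)}\otimes\rR(\theta)\ket{0}.
\]
Since the first column of $\rR(\theta)$ is $(\cos\theta,\sin\theta)^{\!\top}$, and $\mathsf{T}_0(\theta)=\cos\theta$, $\mathsf{T}_1(\theta)=\sin\theta$, this gives $\rR(\theta)\ket{0}=\sum_{z_n\in\mathbb{Z}_2}\mathsf{T}_{z_n}(\theta)\ket{z_n}$, i.e.\ exactly the $j=1$ instance of the claim.

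For the inductive step, assume the identity at level $j-1$ and write $\ket{\psi_{j-1}}=\ket{0}^{\otimes(n-j)}\otimes\ket{0}_{n-j+1}\otimes\sum_{w'}c_{w'}\ket{w'}$, where $w'=z_{n-j+2}\cdots z_n$ ranges over $\mathbb{Z}_2^{j-1}$ on the last $j-1$ wires and $c_{w'}$ is the level-$(j-1)$ coefficient supplied by the hypothesis. The key structural point is that $\rU_j=\prod_{v\in\mathbb{Z}_2^{j-1}}\bigl(\rI_{2^{n-j}}\otimes\CC_v^{(1)}(\rR(\theta_v))\bigr)$ is a uniformly controlled rotation whose target is the leftmost active wire $n-j+1$ (which is in $\ket{0}$) and whose control block is the last $j-1$ wires. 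Because the selective operators $\rC_v^{(1)}$ have pairwise orthogonal supports indexed by the control pattern, the factors $\CC_v^{(1)}$ commute, and on the branch where the control block equals $w'$ exactly one factor ($v=w'$) is nontrivial, applying $\rR(\theta_{w'})$ to wire $n-j+1$ while the rest act as the identity. Hence
\[
\ket{\psi_j}=\ket{0}^{\otimes(n-j)}\otimes\sum_{w'}c_{w'}\,\bigl(\rR(\theta_{w'})\ket{0}\bigr)\otimes\ket{w'},
\]
and expanding $\rR(\theta_{w'})\ket{0}=\sum_{z_{n-j+1}}\mathsf{T}_{z_{n-j+1}}(\theta_{z_{n-j+2}\cdots z_n})\ket{z_{n-j+1}}$ prepends the new factor $\mathsf{T}_{z_{n-j+1}}(\theta_{z_{n-j+2}\cdots z_n})$ to $c_{w'}$, yielding the length-$j$ product attached to $\ket{w}$ with $w=z_{n-j+1}\cdots z_n$. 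Setting $j=n$ produces the stated closed form of $\ket{\psi_n}$.

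I expect the routine part to be the trigonometric bookkeeping, and the step deserving the most care to be the structural reduction: justifying that the product over all $2^{j-1}$ control patterns collapses to a single branch-wise rotation. This rests on the definitions of Section~\ref{subsec:controlled}, by which each $\CC_v^{(1)}(\rR(\theta_v))$ acts as the identity on any basis state whose control block differs from $v$; since the rotation on the target leaves the control block untouched, on $\ket{\psi_{j-1}}$ the distinct factors neither interfere nor depend on their order. One must also keep the tensor placement straight: the target wire $n-j+1$ is the \emph{leftmost} of the active block, tensored to the left of the control word $w'$, which is precisely what makes the new $\mathsf{T}$-factor prepend (rather than append) to the existing product and reproduce the claimed ordering.
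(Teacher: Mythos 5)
Your proposal is correct and follows essentially the same route as the paper's proof: induction on $j$, with the base case computed directly from $\rR(\theta)\ket{0}$ and the inductive step resting on the observation that the factors of $\rU_j$ act on pairwise orthogonal control branches (hence commute and collapse to a single branch-wise rotation of the leftmost active wire), followed by expanding $\rR(\theta_{\mathbf{w}})\ket{0}$ and prepending the new $\mathsf{T}$-factor. The points you flag as needing care --- the orthogonal-support commutation and the tensor placement of the target to the \emph{left} of the control word --- are exactly the ones the paper makes explicit.
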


\begin{proof}
We proceed by induction on $j$.

\smallskip
\noindent\textbf{Base case ($j=1$).}
By definition, $\rU_1=\rI_{2^{n-1}}\otimes \rR(\theta)$ acts only on the last qubit. Hence
\[
\ket{\psi_1}
=\rU_1\ket{0}^{\otimes n}
=\ket{0}^{\otimes(n-1)}\otimes \rR(\theta)\ket{0}.
\]
Since $\rR(\theta)\ket{0}=\cos\theta\,\ket{0}+\sin\theta\,\ket{1}$, we obtain
\[
\ket{\psi_1}
=\ket{0}^{\otimes(n-1)}\otimes\bigl(\cos\theta\,\ket{0}+\sin\theta\,\ket{1}\bigr)
=\ket{0}^{\otimes(n-1)}\otimes\sum_{z_n\in\mathbb{Z}_2}\mathsf{T}_{z_n}(\theta)\ket{z_n},
\]
which is the claimed formula for $j=1$.

\smallskip
\noindent\textbf{Inductive step.}
Assume the statement holds for some $j-1$ with $2\le j\le n$, i.e.
\begin{equation}\label{eq:psi_jminus1}
\ket{\psi_{j-1}}
=
\ket{0}^{\otimes (n-(j-1))}\otimes
\sum_{\mathbf{w}=z_{n-j+2}\cdots z_n\in\mathbb{Z}_2^{j-1}}
A_{\mathbf{w}}\,\ket{\mathbf{w}},
\end{equation}
where, for each $\mathbf{w}=z_{n-j+2}\cdots z_n$, the amplitude $A_{\mathbf{w}}$ is given by the product
\begin{equation}\label{eq:Aw_def}
A_{\mathbf{w}}
=
\mathsf{T}_{z_{n-j+2}}(\theta_{z_{n-j+3}\cdots z_n})\cdots
\mathsf{T}_{z_{n-1}}(\theta_{z_n})\,
\mathsf{T}_{z_n}(\theta).
\end{equation}

At this stage, the basis vectors $\ket{\mathbf{w}}$ label the computational states of the last $(j-1)$ qubits.
Their identification with suffixes of the global binary index follows from the fact that truncating
the first $(j-1)$ bits of an integer corresponds to extracting its suffix.
More precisely, if $k\in\mathbb{Z}_{2^n}$ has binary expansion $b_n(k)=z_1\cdots z_n$, then
\[
b_{n-(j-1)}\!\bigl(\lfloor 2^{-(j-1)}k\rfloor\bigr)=z_{n-j+2}\cdots z_n,
\]
as shown in Lemma~\ref{lem:suffix-extraction}.

It will be convenient to isolate the $(n-j+1)$-th qubit (the next qubit to be rotated at stage $j$).
Write the tensor product in \eqref{eq:psi_jminus1} as
\[
\ket{0}^{\otimes(n-(j-1))}
=\ket{0}^{\otimes(n-j)}\otimes \ket{0},
\]
so that
\begin{equation}\label{eq:psi_jminus1_rewrite}
\ket{\psi_{j-1}}
=
\ket{0}^{\otimes(n-j)}\otimes
\sum_{\mathbf{w}\in\mathbb{Z}_2^{j-1}} A_{\mathbf{w}}\,\bigl(\ket{0}\otimes\ket{\mathbf{w}}\bigr),
\end{equation}
where the factor $\ket{0}$ corresponds to qubit $(n-j+1)$ and $\ket{\mathbf{w}}$ to the last $(j-1)$ qubits.

\smallskip
\noindent\emph{Action of $\rU_j$ on basis branches.}
By definition,
\[
\rU_j=\prod_{\mathbf{w}\in\mathbb{Z}_2^{j-1}}\Bigl(\rI_{2^{n-j}}\otimes \CC_{\mathbf{w}}^{(1)}\bigl(\rR(\theta_{\mathbf{w}})\bigr)\Bigr).
\]
Inside the active $j$-qubit register (qubits $(n-j+1),\dots,n$), the gate
$\CC_{\mathbf{w}}^{(1)}(\rR(\theta_{\mathbf{w}}))\in\mathrm{U}(2^j)$ acts as follows:
it applies $\rR(\theta_{\mathbf{w}})$ to the \emph{first} qubit of that register (i.e.\ qubit $(n-j+1)$)
if and only if the remaining $(j-1)$ qubits are in the computational basis state $\ket{\mathbf{w}}$, and
acts as the identity on all other computational basis states.
Consequently, for each $\mathbf{w}\in\mathbb{Z}_2^{j-1}$,
\begin{equation}\label{eq:control_action}
\CC_{\mathbf{w}}^{(1)}\bigl(\rR(\theta_{\mathbf{w}})\bigr)\,\bigl(\ket{0}\otimes\ket{\mathbf{w}}\bigr)
=\bigl(\rR(\theta_{\mathbf{w}})\ket{0}\bigr)\otimes\ket{\mathbf{w}},
\end{equation}
and for $\mathbf{w}'\neq \mathbf{w}$,
\[
\CC_{\mathbf{w}}^{(1)}\bigl(\rR(\theta_{\mathbf{w}})\bigr)\,\bigl(\ket{0}\otimes\ket{\mathbf{w}'}\bigr)
=\ket{0}\otimes\ket{\mathbf{w}'}.
\]
Since the projectors onto distinct control branches are orthogonal, these controlled rotations act on
disjoint subspaces and therefore commute; hence the product over $\mathbf{w}$ applies the correct rotation in each
branch independently.

\smallskip
\noindent\emph{Splitting of amplitudes.}
Applying $\rU_j$ to \eqref{eq:psi_jminus1_rewrite} and using \eqref{eq:control_action} gives
\begin{align*}
\ket{\psi_j}
&=\rU_j\ket{\psi_{j-1}}\\
&=
\ket{0}^{\otimes(n-j)}\otimes
\sum_{\mathbf{w}\in\mathbb{Z}_2^{j-1}} A_{\mathbf{w}}\,
\Bigl(\rR(\theta_{\mathbf{w}})\ket{0}\Bigr)\otimes\ket{\mathbf{w}}.
\end{align*}
Finally, $\rR(\theta_{\mathbf{w}})\ket{0}=\cos\theta_{\mathbf{w}}\,\ket{0}+\sin\theta_{\mathbf{w}}\,\ket{1}
=\sum_{z\in\mathbb{Z}_2}\mathsf{T}_{z}(\theta_{\mathbf{w}})\ket{z}$, so
\begin{align*}
\ket{\psi_j}
&=
\ket{0}^{\otimes(n-j)}\otimes
\sum_{\mathbf{w}\in\mathbb{Z}_2^{j-1}} A_{\mathbf{w}}\,
\sum_{z_{n-j+1}\in\mathbb{Z}_2}\mathsf{T}_{z_{n-j+1}}(\theta_{\mathbf{w}})\,
\ket{z_{n-j+1}}\otimes\ket{\mathbf{w}}\\
&=
\ket{0}^{\otimes(n-j)}\otimes
\sum_{z_{n-j+1}\in\mathbb{Z}_2}\;
\sum_{\mathbf{w}\in\mathbb{Z}_2^{j-1}}
\Bigl(A_{\mathbf{w}}\,\mathsf{T}_{z_{n-j+1}}(\theta_{\mathbf{w}})\Bigr)\,
\ket{z_{n-j+1}\mathbf{w}}.
\end{align*}
Renaming the concatenated word $z_{n-j+1}\mathbf{w}$ as
\[
\mathbf{w}'=z_{n-j+1}z_{n-j+2}\cdots z_n\in\mathbb{Z}_2^j,
\]
and substituting $A_{\mathbf{w}}$ from \eqref{eq:Aw_def}, we obtain exactly the claimed product form
\[
\ket{\psi_j}
=
\ket{0}^{\otimes (n-j)}\otimes
\sum_{\mathbf{w}'=z_{n-j+1}\cdots z_n\in\mathbb{Z}_2^{j}}
\Bigl(\mathsf{T}_{z_{n-j+1}}(\theta_{z_{n-j+2}\cdots z_n})\cdots
\mathsf{T}_{z_{n-1}}(\theta_{z_n})\mathsf{T}_{z_n}(\theta)\Bigr)\,
\ket{\mathbf{w}'}.
\]
This completes the induction. The final statement for $\ket{\psi_n}$ is the special case $j=n$.
\end{proof}

\subsection{Completion of the proof of Theorem~\ref{theorem:Grover-Rudolph}}\label{subsec:completion}

\begin{proof}[Proof of Theorem~\ref{theorem:Grover-Rudolph}]
Let $\rho:=\rU\rho_0\rU^\star=\ket{\psi_n}\bra{\psi_n}$, with $\ket{\psi_n}$ as in
Lemma~\ref{lem:inductive_state}. For $k\in\mathbb{Z}_{2^n}$ with $b_n(k)=z_1\cdots z_n$ we compute
\[
\mathbb{P}_{\rho}(\rA=k)
=\tr\bigl(\rho\,\ket{b_n(k)}\bra{b_n(k)}\bigr)
=|\braket{b_n(k)}{\psi_n}|^2
=\mathsf{T}_{z_1}^2(\theta_{z_2\cdots z_n})\cdots \mathsf{T}_{z_n}^2(\theta).
\]
By Proposition~\ref{prop:trig_factorization}, the right-hand side equals
$p_{z_1\cdots z_n}=\int_{k/2^n}^{(k+1)/2^n}\varrho(x)\,dx$, hence
\[
\mathbb{P}_{\rho}(\rA=k)=\int_{k/2^n}^{(k+1)/2^n}\varrho(x)\,dx,
\qquad \forall\,k\in\mathbb{Z}_{2^n}.
\]
Finally, the circuit length of $\rU$ is $N-1$ by the gate count above. This proves the theorem.
\end{proof}

\section{Stability under imperfect rotation angles}
\label{sec:stability}
The results above establish exact correctness of the Grover--Rudolph
construction when all rotation angles are implemented exactly. In
practice, however, the angles may be affected by finite-precision
rounding, synthesis error, or calibration error. In this section we
quantify how such deterministic angular perturbations propagate to the
final output distribution.

Throughout this section we keep the rotation convention used in the
previous sections, namely
\[
\rR(\alpha)
=
\begin{pmatrix}
\cos\alpha&-\sin\alpha\\
\sin\alpha&\cos\alpha
\end{pmatrix},
\qquad
\rR(\alpha)\ket{0}
=
\cos\alpha\,\ket{0}
+
\sin\alpha\,\ket{1}.
\]
Thus the Grover--Rudolph angle \(\theta_{\mathbf w}\) directly determines
the binary split
\[
\rho_{\mathbf w}(0)=\cos^2\theta_{\mathbf w},
\qquad
\rho_{\mathbf w}(1)=\sin^2\theta_{\mathbf w}.
\]

Let
\[
\Theta
=
\{\theta_{\mathbf w}:
\mathbf w\in\mathbb Z_2^m,\ 0\le m\le n-1\}
\]
be the exact family of Grover--Rudolph angles. We consider a perturbed
family
\[
\widetilde\Theta
=
\{\widetilde\theta_{\mathbf w}:
\mathbf w\in\mathbb Z_2^m,\ 0\le m\le n-1\},
\qquad
\widetilde\theta_{\mathbf w}
=
\theta_{\mathbf w}+\varepsilon_{\mathbf w}.
\]
Throughout this section we adopt the convention
\(\mathbb Z_2^0:=\{\emptyset\}\), so that the index set
\(\bigcup_{m=0}^{n-1}\mathbb Z_2^m\) includes the empty word \(\emptyset\)
at level \(m=0\); the corresponding angle \(\theta_\emptyset\) (resp.\
\(\widetilde\theta_\emptyset\), \(\varepsilon_\emptyset\)) is the
top-level rotation angle (resp.\ its perturbation, its error).
With this convention, all sums and maxima over
\(\mathbf w\in\mathbb Z_2^m\), \(0\le m\le n-1\), are well defined.
The perturbed circuit is obtained from the exact Grover--Rudolph circuit 
by replacing each occurrence of
\(\rR(\theta_{\mathbf w})\) by
\(\rR(\widetilde\theta_{\mathbf w})\), while keeping the same dyadic
tree and the same control structure.

For a binary word \(\mathbf z=z_1\cdots z_n\), we use the suffix notation
\[
\mathbf z_{r+1:n}:=z_{r+1}\cdots z_n,
\qquad
1\le r\le n-1,
\]
and we extend the convention to \(r=n\) by setting
\(\mathbf z_{n+1:n}:=\emptyset\), so that
\(\theta_{\mathbf z_{n+1:n}}=\theta_\emptyset\) is the top-level angle.
With these conventions, the
factorization proved in Proposition~\ref{prop:trig_factorization} reads
\[
p_{\mathbf z}
=
\prod_{r=1}^{n}
\mathsf T_{z_r}^2
\bigl(\theta_{\mathbf z_{r+1:n}}\bigr),
\]
where the suffix $\mathbf z_{r+1:n}$ has length $n-r \in \{0,1,\ldots,n-1\}$,
covering all control words $\mathbf w\in\mathbb Z_2^m$, $0\le m\le n-1$.
The perturbed output
distribution is therefore
\[
\widetilde p_{\mathbf z}
=
\prod_{r=1}^{n}
\mathsf T_{z_r}^2
\bigl(\widetilde\theta_{\mathbf z_{r+1:n}}\bigr).
\]

We measure the distance between probability distributions on
\(\mathbb Z_2^n\) using total variation distance,
\[
\mathrm{TV}(p,q)
:=
\frac12
\sum_{\mathbf z\in\mathbb Z_2^n}
|p_{\mathbf z}-q_{\mathbf z}|.
\]

\begin{theorem}[Stability under angular perturbations]
\label{thm:gr_angle_stability}
Let \(p\) be the probability distribution prepared by the exact
Grover--Rudolph circuit, and let \(\widetilde p\) be the probability
distribution obtained by replacing each angle
\(\theta_{\mathbf w}\) by
\[
\widetilde\theta_{\mathbf w}
=
\theta_{\mathbf w}
+
\varepsilon_{\mathbf w}.
\]
Then
\[
\mathrm{TV}(p,\widetilde p)
\le
\min\!\left(1,\;
\sum_{m=0}^{n-1}
\max_{\mathbf w\in\mathbb Z_2^m}
|\varepsilon_{\mathbf w}|\right).
\]
In particular, if
\[
|\varepsilon_{\mathbf w}|\le \eta
\qquad
\text{for all }
\mathbf w\in\mathbb Z_2^m,\ 0\le m\le n-1,
\]
then
\[
\mathrm{TV}(p,\widetilde p)
\le
\min(1,\,n\eta).
\]
\end{theorem}

\begin{proof}
For each word \(\mathbf w\in\mathbb Z_2^m\), \(0\le m\le n-1\), define
the exact binary transition probability
\[
\rho_{\mathbf w}(0)
=
\cos^2\theta_{\mathbf w},
\qquad
\rho_{\mathbf w}(1)
=
\sin^2\theta_{\mathbf w},
\]
and the perturbed binary transition probability
\[
\widetilde\rho_{\mathbf w}(0)
=
\cos^2\widetilde\theta_{\mathbf w},
\qquad
\widetilde\rho_{\mathbf w}(1)
=
\sin^2\widetilde\theta_{\mathbf w}.
\]
Thus, for \(\mathbf z=z_1\cdots z_n\), using the convention
\(\mathbf z_{n+1:n}:=\emptyset\),
\[
p_{\mathbf z}
=
\prod_{r=1}^{n}
\rho_{\mathbf z_{r+1:n}}(z_r),
\qquad
\widetilde p_{\mathbf z}
=
\prod_{r=1}^{n}
\widetilde\rho_{\mathbf z_{r+1:n}}(z_r).
\]

For \(s=0,\ldots,n\), define a hybrid distribution \(p^{(s)}\) by using
the perturbed transitions for the last \(s\) steps of the suffix
recursion and the exact transitions for the remaining steps:
\[
p^{(s)}_{\mathbf z}
=
\prod_{r=1}^{n-s}
\rho_{\mathbf z_{r+1:n}}(z_r)
\prod_{r=n-s+1}^{n}
\widetilde\rho_{\mathbf z_{r+1:n}}(z_r).
\]
Then \(p^{(0)}=p\) and \(p^{(n)}=\widetilde p\).
Each \(p^{(s)}\) is a probability distribution on \(\mathbb{Z}_2^n\): non-negativity
is clear since every factor \(\rho_{\mathbf{z}_{r+1:n}}(z_r)\) and
\(\widetilde\rho_{\mathbf{z}_{r+1:n}}(z_r)\) is non-negative, and the normalisation
\(\sum_{\mathbf{z}\in\mathbb{Z}_2^n}p^{(s)}_{\mathbf{z}}=1\) follows by summing
the product bit by bit: for each \(r\), the marginalisation
\(\sum_{z_r\in\{0,1\}}\rho_{\mathbf{z}_{r+1:n}}(z_r)=1\) (respectively
\(\sum_{z_r\in\{0,1\}}\widetilde\rho_{\mathbf{z}_{r+1:n}}(z_r)=1\)) collapses
the corresponding factor to \(1\), so the full sum telescopes to \(1\).
Hence, since \(\mathrm{TV}\) is a metric on the simplex of probability distributions,
the triangle inequality gives
\[
\mathrm{TV}(p,\widetilde p)
\le
\sum_{s=0}^{n-1}
\mathrm{TV}\bigl(p^{(s)},p^{(s+1)}\bigr).
\]

The two hybrid distributions \(p^{(s)}\) and \(p^{(s+1)}\) differ only in
the transition associated with words of length \(s\). More precisely,
conditioning on the suffix
\(\mathbf w\in\mathbb Z_2^s\), the only change is the replacement of the
binary distribution \(\rho_{\mathbf w}\) by
\(\widetilde\rho_{\mathbf w}\). Therefore,
\[
\mathrm{TV}\bigl(p^{(s)},p^{(s+1)}\bigr)
\le
\max_{\mathbf w\in\mathbb Z_2^s}
\mathrm{TV}
\bigl(\rho_{\mathbf w},\widetilde\rho_{\mathbf w}\bigr).
\]
For a binary distribution of the form
\[
\rho_{\mathbf w}
=
(\cos^2\theta_{\mathbf w},\sin^2\theta_{\mathbf w}),
\]
we have
\[
\mathrm{TV}
\bigl(\rho_{\mathbf w},\widetilde\rho_{\mathbf w}\bigr)
=
\left|
\cos^2\theta_{\mathbf w}
-
\cos^2\widetilde\theta_{\mathbf w}
\right|.
\]
Using
\[
\cos^2 a-\cos^2 b
=
-\sin(a+b)\sin(a-b),
\]
we obtain
\[
\left|
\cos^2\theta_{\mathbf w}
-
\cos^2\widetilde\theta_{\mathbf w}
\right|
\le
|\theta_{\mathbf w}-\widetilde\theta_{\mathbf w}|
=
|\varepsilon_{\mathbf w}|.
\]
Consequently,
\[
\mathrm{TV}\bigl(p^{(s)},p^{(s+1)}\bigr)
\le
\max_{\mathbf w\in\mathbb Z_2^s}
|\varepsilon_{\mathbf w}|.
\]
Summing over \(s=0,\ldots,n-1\) gives
\[
\mathrm{TV}(p,\widetilde p)
\le
\min\!\left(1,\,
\sum_{s=0}^{n-1}
\max_{\mathbf w\in\mathbb Z_2^s}
|\varepsilon_{\mathbf w}|\right).
\]
The uniform estimate follows immediately.
\end{proof}

\begin{corollary}[Finite-precision angle synthesis]
\label{cor:finite_precision_angles}
Assume that each Grover--Rudolph angle is approximated by an angle
\(\widetilde\theta_{\mathbf w}\) satisfying
\[
|\widetilde\theta_{\mathbf w}-\theta_{\mathbf w}|
\le
\eta.
\]
Then the output distribution \(\widetilde p\) of the perturbed circuit
satisfies
\[
\mathrm{TV}(p,\widetilde p)
\le
\min(1,\,n\eta).
\]
In particular, if the angles are rounded to a uniform grid of mesh
\(\Delta\), so that
\[
|\widetilde\theta_{\mathbf w}-\theta_{\mathbf w}|
\le
\frac{\Delta}{2},
\]
then
\[
\mathrm{TV}(p,\widetilde p)
\le
\min\!\left(1,\,\frac{n\Delta}{2}\right).
\]
\end{corollary}

\begin{remark}[Physical \(\RY\)-angles]
\label{rem:physical_RY_angles}
The previous estimates are stated in terms of the Grover--Rudolph
rotation convention
\[
\rR(\theta)
=
\begin{pmatrix}
\cos\theta&-\sin\theta\\
\sin\theta&\cos\theta
\end{pmatrix}.
\]
Many quantum-computing libraries use instead the standard convention
\[
\RY(\varphi)
=
\begin{pmatrix}
\cos(\varphi/2)&-\sin(\varphi/2)\\
\sin(\varphi/2)&\cos(\varphi/2)
\end{pmatrix}.
\]
Thus
\[
\rR(\theta)=\RY(2\theta).
\]
If the implemented physical angle is
\[
\widetilde\varphi_{\mathbf w}
=
\varphi_{\mathbf w}
+
\delta_{\mathbf w},
\qquad
\varphi_{\mathbf w}=2\theta_{\mathbf w},
\]
then the corresponding Grover--Rudolph angular perturbation is
\[
\varepsilon_{\mathbf w}
=
\frac{\delta_{\mathbf w}}{2}.
\]
Therefore Theorem~\ref{thm:gr_angle_stability} gives
\[
\mathrm{TV}(p,\widetilde p)
\le
\min\!\left(1,\;
\frac12
\sum_{m=0}^{n-1}
\max_{\mathbf w\in\mathbb Z_2^m}
|\delta_{\mathbf w}|\right),
\]
and, under the uniform physical-angle bound
\(|\delta_{\mathbf w}|\le\delta\),
\[
\mathrm{TV}(p,\widetilde p)
\le
\min\!\left(1,\,\frac{n\delta}{2}\right).
\]
\end{remark}

\begin{remark}[Interpretation of the bound]
\label{rem:stability_interpretation}
The estimate in Theorem~\ref{thm:gr_angle_stability} exploits the
hierarchical structure of the Grover--Rudolph construction. A direct
operator-norm perturbation bound for a product of \(2^n-1\) rotations
would lead to a worst-case estimate proportional to the total number of
rotations. By contrast, the present argument works at the level of the
conditional probability tree and shows that the untruncated perturbation
sum grows at most linearly with the depth \(n\) of the tree; the
\(\min(1,\cdot)\) truncation then reflects the fact that
\(\mathrm{TV}\le 1\) always holds, so the bound is non-trivial only when
\(n\eta<1\).
\end{remark}

\section{Numerical validation and sensitivity analysis}\label{sec:numerics}
This section provides the numerical validation of the Grover--Rudolph construction.
Section~\ref{subsec:numerical_example} illustrates Theorem~\ref{theorem:Grover-Rudolph} on a
three-qubit example, comparing the target probabilities obtained from the density $\varrho$
with the empirical frequencies produced by a circuit simulation.
Section~\ref{subsec:sensitivity} then carries out a systematic sensitivity study that
quantifies the total-variation error as a joint function of angle-quantization precision and
measurement shot count, for $n\in\{2,3,4\}$ qubits, thereby validating the theoretical
bounds established in Section~\ref{sec:stability}.

\subsection{A numerical example}\label{subsec:numerical_example}

Consider the piecewise-linear probability density (Figure~\ref{fig:Density_Function})
\[
\varrho(x) =
\begin{cases}
4x, & 0 \leq x \leq \frac{1}{2}, \\[2mm]
4 - 4x, & \frac{1}{2} \leq x \leq 1,
\end{cases}
\qquad\text{so that}\qquad \int_0^1 \varrho(x)\,dx=1.
\]
We set $n=3$ (hence $N=2^3=8$) and define the target probabilities
\[
p_k:=\int_{k/2^3}^{(k+1)/2^3} \varrho(x)\,dx,\qquad k\in\mathbb{Z}_{2^3}.
\]
By Theorem~\ref{theorem:Grover-Rudolph}, there exists a circuit $U\in\mathrm{U}(8)$, built from
multi-controlled one-qubit rotations, such that for $\rho_0=\ket{0}^{\otimes 3}\bra{0}^{\otimes 3}$
and $\rho=U\rho_0U^\star$ one has
\[
\mathbb{P}_{\rho}(\rA=k)=p_k,\qquad \forall k\in\mathbb{Z}_{2^3},
\]
where $\rA$ denotes the computational-basis measurement random variable.
Moreover, the construction uses $2^3-1=7$ controlled rotations, hence it is fully determined by
$7$ angles.

\begin{figure}[h]
    \centering
    \includegraphics[scale=0.5]{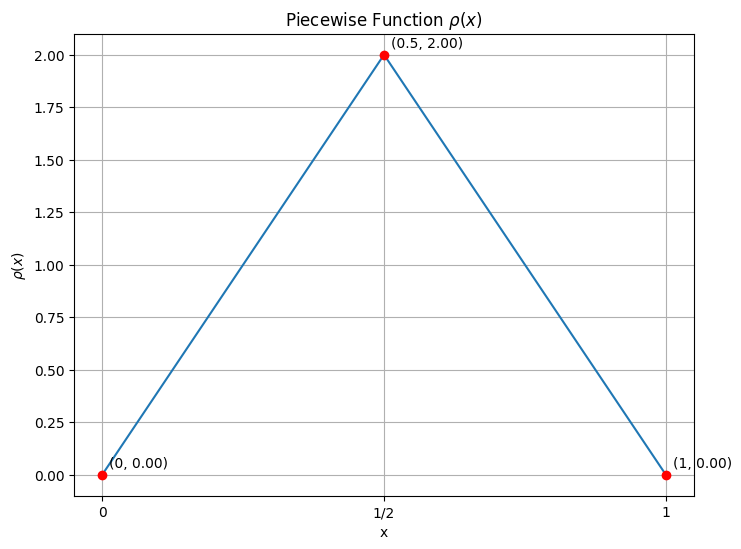}
    \caption{The probability density function $\varrho(x)$.}
    \label{fig:Density_Function}
\end{figure}

\paragraph{Step 1: computing $U_1$.}
The first stage acts on the last qubit and is given by
\[
\rU_1 := \rI_{2^{2}} \otimes \rR(\theta),
\qquad
\theta=\arccos\sqrt{\frac{\int_0^{1/2}\varrho(x)\,dx}{\int_0^1\varrho(x)\,dx}}
=\arccos\sqrt{\int_0^{1/2}\varrho(x)\,dx}=\frac{\pi}{4}.
\]
Figure~\ref{Figura_1} illustrates the integrals involved in the evaluation of $\theta$.

\begin{figure}[h]
    \centering
    \includegraphics[width=0.45\textwidth]{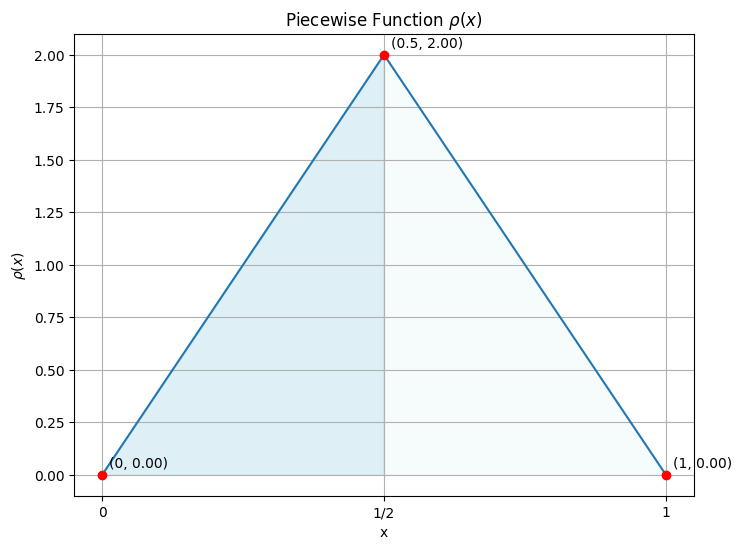}
    \caption{The integral $\int_{0}^{1} \varrho(x)\,dx$ (blue) and $\int_{0}^{1/2} \varrho(x)\,dx$ (dark blue).}
    \label{Figura_1}
\end{figure}

\paragraph{Step 2: computing $U_2$.}
At the second stage, we apply two controlled rotations (one per branch), namely
\[
\rU_2 := \prod_{z_2\in\mathbb{Z}_2}\Bigl(\rI_{2} \otimes \CC^{(1)}_{z_2}\,\rR(\theta_{z_2})\Bigr),
\]
with parameters
\[
\theta_0 = \arccos\sqrt{\frac{\int_{0}^{1/4} \varrho(x)\,dx}{\int_{0}^{1/2} \varrho(x)\,dx}} =  \frac{\pi}{3},
\qquad
\theta_1 = \arccos\sqrt{\frac{\int_{1/2}^{3/4} \varrho(x)\,dx}{\int_{1/2}^{1} \varrho(x)\,dx}} = \frac{\pi}{6}.
\]
Figure~\ref{Figura_2} shows the corresponding decomposition of the integrals.

\begin{figure}[h]
    \centering
    \includegraphics[width=0.5\textwidth]{./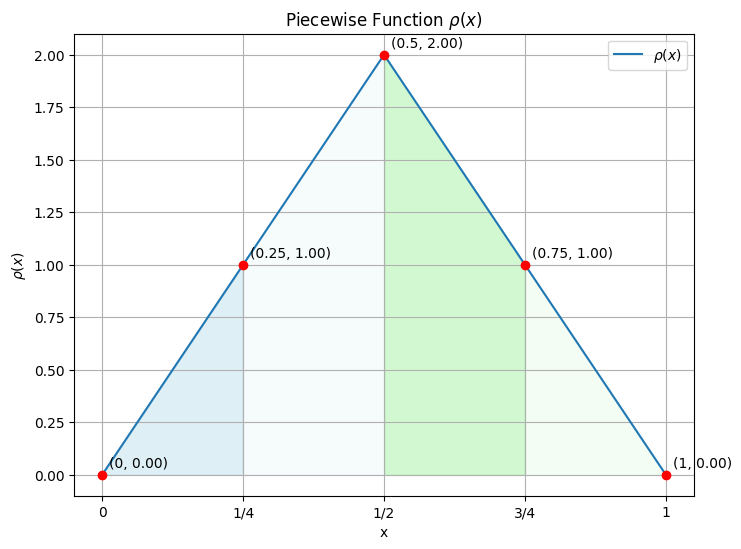}
    \caption{The integrals $\int_{0}^{1/2}\varrho(x)\,dx$ (blue) and $\int_{1/2}^{1}\varrho(x)\,dx$ (green), and their dyadic refinement.}
    \label{Figura_2}
\end{figure}

\paragraph{Step 3: computing $U_3$.}
Finally, we apply four controlled rotations corresponding to the four leaves of the depth-$2$
binary tree:
\[
\rU_3 := \prod_{z_1 z_2 \in \mathbb{Z}_{2}^2} \CC^{(1)}_{z_1 z_2}\,\rR(\theta_{z_1 z_2}),
\]
where the angles are
\[
\theta_{00} = \frac{\pi}{3},\qquad
\theta_{11} = \frac{\pi}{6},\qquad
\theta_{01} = \arccos \frac{\sqrt{21}}{6},\qquad
\theta_{10} = \arccos \frac{\sqrt{15}}{6}.
\]
Figure~\ref{Figura_3} illustrates the integral regions used to determine these parameters.
The complete circuit is therefore
\[
\rU := \rU_3\rU_2\rU_1
=
\left(\prod_{z_1 z_2 \in \mathbb{Z}_{2}^2} \CC^{(1)}_{z_1 z_2}\,\rR(\theta_{z_1 z_2})\right)
\left(\prod_{z_2\in\mathbb{Z}_2}\Bigl(\rI_{2} \otimes \CC^{(1)}_{z_2}\,\rR(\theta_{z_2})\Bigr)\right)
\left(\rI_{2^{2}} \otimes \rR(\theta)\right).
\]

\begin{figure}[h]
    \centering
    \includegraphics[width=0.55\textwidth]{./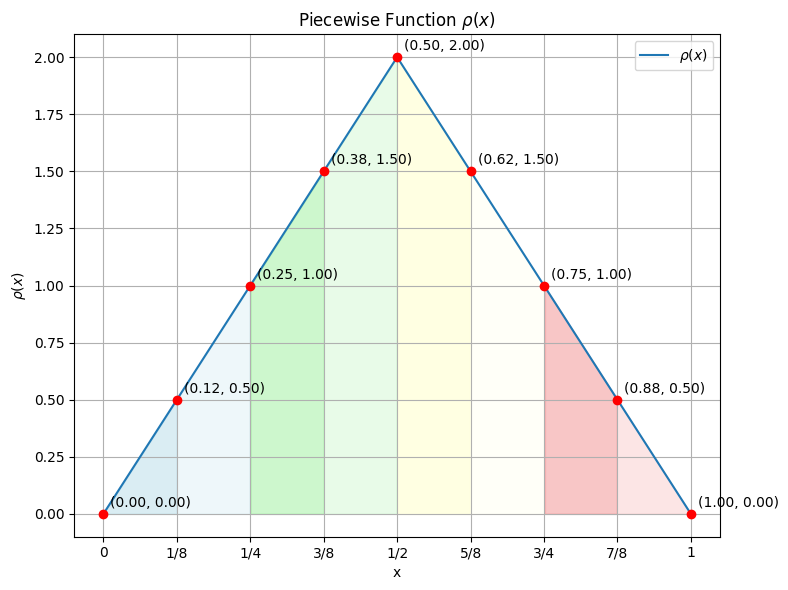}
    \caption{Integral regions used to compute $\theta_{00}$, $\theta_{01}$, $\theta_{10}$, and $\theta_{11}$.}
    \label{Figura_3}
\end{figure}

\paragraph{Quantum circuit implementation and simulation.}
Figure~\ref{fig:quantikz-circuit} displays the corresponding circuit.
We simulate this circuit using Qiskit's Aer module \texttt{qasm\_simulator}. Running $2048$ shots
on the simulated $3$-qubit device yields the empirical distribution shown in
Figure~\ref{Histogram}.

\begin{figure}[ht]
    \centering
    \scalebox{1.0}{
    \begin{quantikz}[row sep=0.5cm, font=\scriptsize]
  \lstick{$z_0$} &&& \gate[wires=3,style={fill=blue!10},label style={blue}]{U_3} & \\
  \lstick{$z_1$} && \gate[wires=2,style={fill=green!10},label style={customgreen}]{U_2} & & \\
  \lstick{$z_2$} & \gate[style={fill=red!10},label style={red}]{U_1} &&&
\end{quantikz}
    =
    \begin{quantikz}[row sep=0.5cm, font=\scriptsize]
  &&&&& \gate[style={fill=blue!10},label style={blue}]{\rR(\theta_{00})}
        & \gate[style={fill=blue!10},label style={blue}]{\rR(\theta_{10})}
        & \gate[style={fill=blue!10},label style={blue}]{\rR(\theta_{01})}
        & \gate[style={fill=blue!10},label style={blue}]{\rR(\theta_{11})}
        & \meter{} \\
  & \slice{} && \gate[style={fill=green!10},label style={customgreen}]{\rR(\theta_0)}
             & \gate[style={fill=green!10},label style={customgreen}]{\rR(\theta_1)}
             & \octrl{-1} & \ctrl{-1} & \octrl{-1} & \ctrl{-1} & \meter{} \\
  & \gate[style={fill=red!10},label style={red}]{\rR(\theta)}
    && \octrl{-1} & \ctrl{-1} \slice{}
    & \octrl{-2} & \octrl{-2} & \ctrl{-2} & \ctrl{-2} & \meter{}
\end{quantikz}
    }
    \caption{Quantum circuit $\rU$ implementing Theorem~\ref{theorem:Grover-Rudolph} for $n=3$.}
    \label{fig:quantikz-circuit}
\end{figure}
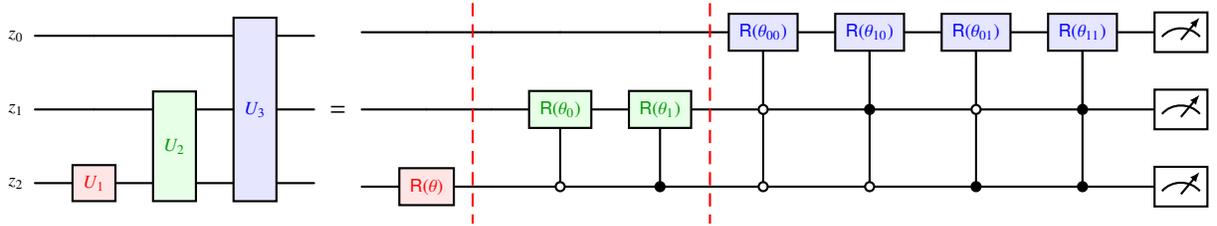

\begin{figure}[h]
    \centering
    \includegraphics[scale=0.5]{./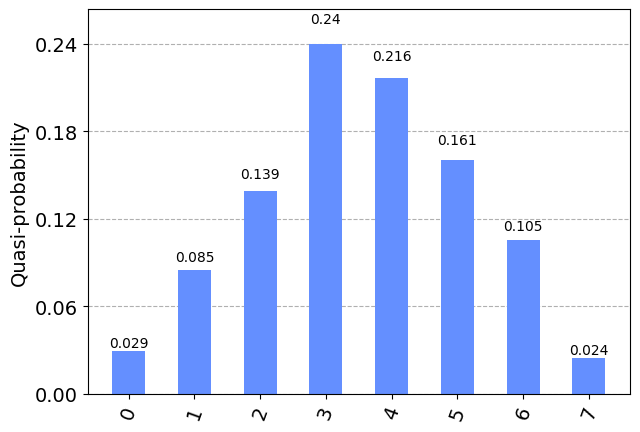}
    \caption{Empirical outcome distribution obtained from $2048$ shots of the simulated circuit.}
    \label{Histogram}
\end{figure}

\paragraph{Discussion.}
This example illustrates concretely how the Grover--Rudolph recursion translates dyadic integrals of
$\varrho$ into a hierarchy of controlled rotation angles.
Comparing the theoretical probabilities $\{p_k\}$ with the empirical frequencies produced by the simulator,
we observe an overall good agreement. In particular, the maximum deviation occurs on the interval
$[3/8,\,1/2]$, where the estimated probability is $0.240$ whereas the exact value is $0.21875$,
yielding an absolute error of approximately $0.02125$. The minimum deviation is observed on
$[1/2,\,5/8]$, with an absolute error of about $0.00275$.
These discrepancies are consistent with finite-shot sampling fluctuations and with the finite-precision
representation of rotation angles in the simulation.

A natural next step is to implement the same circuit on a real quantum device, where hardware noise,
gate calibration errors, and decoherence effects would provide a more realistic assessment of the method's
practical performance. Nevertheless, even in this idealized setting, the example provides a transparent
verification of the construction and supports the theoretical results established in the previous sections.

\subsection{Sensitivity to angle precision and shot noise}\label{subsec:sensitivity}
The Grover--Rudolph circuit depends on $N-1$ real-valued rotation angles.
In any practical implementation, two independent sources of error degrade the
output distribution: (i) \emph{angle quantization}, arising from finite-precision
representation of the $\theta_{\mathbf{w}}$ values, and (ii) \emph{shot noise}, arising
from finite sampling of the prepared quantum state.
We study both effects on the triangle density $\varrho$ for $n\in\{2,3,4\}$ qubits.

\paragraph{Experimental setup.}
Let $\{p_k\}_{k\in\mathbb{Z}_{2^n}}$ denote the exact target probabilities and let
$\{p_k^{(b)}\}$ denote the output probabilities obtained when each angle $\theta_{\mathbf{w}}$
is approximated by a finite-precision value.
As recalled in Remark~\ref{rem:physical_RY_angles}, the Grover--Rudolph angles
$\theta_{\mathbf{w}}\in[0,\pi/2]$ defined by $\cos^2\theta_{\mathbf{w}}=p_{0\mathbf{w}}/p_{\mathbf{w}}$
are passed to Qiskit as physical $\RY$ angles $\phi_{\mathbf{w}}:=2\theta_{\mathbf{w}}\in[0,\pi]$.
We quantise the physical angles $\phi_{\mathbf{w}}$ to the nearest multiple of
$\pi/2^{b-1}$ (a $b$-bit uniform grid on $[0,\pi]$), which is equivalent to quantising
$\theta_{\mathbf{w}}$ to the nearest multiple of $\pi/2^b$ on $[0,\pi/2]$.
We measure accuracy by the \emph{total variation distance} $\mathrm{TV}(p,q)$
defined in Section~\ref{sec:stability},
which equals half the $\ell^1$ error and is bounded in $[0,1]$.
All simulations use Qiskit's \texttt{AerSimulator}---the same backend employed in
Section~\ref{subsec:numerical_example}---with the \texttt{statevector} method for
angle-precision experiments (noise-free exact output) and the \texttt{automatic}
method (\texttt{qasm\_simulator}) for shot-noise experiments.
Shot-noise experiments average $10$ independent runs with fixed random seeds for reproducibility.

\paragraph{Effect of angle precision (Table~\ref{tab:tv_angles}).}
Table~\ref{tab:tv_angles} reports $\mathrm{TV}(p, p^{(b)})$ computed via statevector simulation
(no shot noise) for $b\in\{8,16,32\}$ bits and $n\in\{2,3,4\}$.
\begin{table}[h]
\centering
\begin{tabular}{c|ccc}
$n$ & $b=8$ & $b=16$ & $b=32$ \\
\hline
$2$ & $3.55\times10^{-3}$ & $1.38\times10^{-5}$ & $2.11\times10^{-10}$ \\
$3$ & $3.56\times10^{-3}$ & $1.66\times10^{-5}$ & $1.51\times10^{-10}$ \\
$4$ & $3.07\times10^{-3}$ & $8.80\times10^{-6}$ & $1.91\times10^{-10}$ \\
\end{tabular}
\caption{TV distance due to angle quantization alone
(\texttt{AerSimulator} statevector method, no shot noise).
Each entry is $\mathrm{TV}(p, p^{(b)})$ where $p^{(b)}$ uses a $b$-bit
uniform grid on the physical $\RY$ angle $\phi_{\mathbf{w}}=2\theta_{\mathbf{w}}\in[0,\pi]$,
equivalently a grid of step $\pi/2^b$ on $\theta_{\mathbf{w}}\in[0,\pi/2]$.}
\label{tab:tv_angles}
\end{table}
The TV error decreases rapidly with $b$: going from $8$ to $16$ bits reduces the error by
roughly three orders of magnitude, and $32$-bit precision yields errors below $10^{-9}$.
Notably, the error is nearly \emph{independent of $n$}: rounding each of the $2^n-1$
physical angles independently does not cause an accumulation of errors in TV distance
for fixed $b$, because each stage operates on an orthogonal subspace and errors do not
compound multiplicatively across stages in total variation.

\paragraph{Effect of shot noise (Table~\ref{tab:tv_shots}).}
Table~\ref{tab:tv_shots} reports the mean TV distance between $p$ and the empirical
frequency distribution obtained from $S\in\{256,1024,4096\}$ shots with exact angles,
simulated via the \texttt{AerSimulator} \texttt{qasm\_simulator} method.
\begin{table}[h]
\centering
\begin{tabular}{c|ccc}
$n$ & $S=256$ & $S=1024$ & $S=4096$ \\
\hline
$2$ & $0.03477$ & $0.01855$ & $0.01311$ \\
$3$ & $0.06250$ & $0.02559$ & $0.01733$ \\
$4$ & $0.09609$ & $0.04482$ & $0.02412$ \\
\end{tabular}
\caption{TV distance due to shot noise alone
(\texttt{AerSimulator} \texttt{qasm\_simulator}, exact angles,
mean over $10$ independent runs).}
\label{tab:tv_shots}
\end{table}
The TV distance scales approximately as $O(\sqrt{2^n/S})$, consistent with the
standard Hoeffding bound stated in Remark~\ref{rem:hoeffding_tv} below.
For fixed $S$, the error grows with $n$ because there are more outcomes to estimate;
for fixed $n$, the error decreases at the classical $O(1/\sqrt{S})$ rate.

\begin{remark}[Hoeffding bound for the TV distance]
\label{rem:hoeffding_tv}
Let $\hat p_k = n_k/S$ be the empirical frequency of outcome $k$ obtained from $S$ independent
measurements of the prepared state, so that $(n_0,\dots,n_{N-1})$ follows a multinomial
distribution with parameters $S$ and $(p_0,\dots,p_{N-1})$.
By the Hoeffding inequality~\cite{Hoeffding1963} applied to each coordinate
$\hat p_k - p_k\in[-1,1]$, together with the union bound over the $N=2^n$ outcomes,
one obtains that with probability at least $1-\delta$,
\[
\mathrm{TV}(p,\hat p)
=
\frac{1}{2}\sum_{k=0}^{N-1}|\hat p_k - p_k|
\;\le\;
\sqrt{\frac{2^n\log(2/\delta)}{2S}}.
\]
In particular, achieving $\mathrm{TV}(p,\hat p)\le\varepsilon$ with confidence $1-\delta$ requires
\[
S\;\ge\;\frac{2^n\log(2/\delta)}{2\varepsilon^2},
\]
i.e.\ $S=\Omega(2^n/\varepsilon^2)$ shots.
This should be contrasted with the deterministic angle-perturbation estimate
of Corollary~\ref{cor:finite_precision_angles}: for $b$-bit angle precision on
$\theta_{\mathbf{w}}\in[0,\pi/2]$ (equivalently, $b$-bit precision on the physical
$\RY$ angle $\phi_{\mathbf{w}}=2\theta_{\mathbf{w}}\in[0,\pi]$), the mesh is
$\Delta=\pi/2^b$ and the angle-induced TV error satisfies
$\mathrm{TV}(p,\tilde p)\le n\pi/2^{b+1}$, which is independent of $S$ and decreases
\emph{exponentially} in $b$.
The numerical results in Tables~\ref{tab:tv_angles}--\ref{tab:tv_shots} confirm
that the shot-noise term $O(\sqrt{2^n/S})$ dominates over the angle-precision term
$O(n/2^b)$ for all practically relevant values of $b\ge 8$ and $S\le 4096$.
\end{remark}
Table~\ref{tab:tv_combined} shows the combined TV distance for $b\in\{8,16,32\}$ bits
with varying shots; the results for $b=16$ and $b=32$ are numerically indistinguishable
at all shot counts, while $b=8$ shows a small but visible excess at low shot counts.
\begin{table}[h]
\centering
\begin{tabular}{cc|ccc}
& & $S=256$ & $S=1024$ & $S=4096$ \\
\hline
\multirow{3}{*}{$b=8$}
& $n=2$ & $0.03594$ & $0.01768$ & $0.01343$ \\
& $n=3$ & $0.06211$ & $0.02627$ & $0.01755$ \\
& $n=4$ & $0.09648$ & $0.04521$ & $0.02456$ \\
\hline
\multirow{3}{*}{$b=16$}
& $n=2$ & $0.03477$ & $0.01855$ & $0.01311$ \\
& $n=3$ & $0.06250$ & $0.02559$ & $0.01733$ \\
& $n=4$ & $0.09609$ & $0.04482$ & $0.02412$ \\
\hline
\multirow{3}{*}{$b=32$}
& $n=2$ & $0.03477$ & $0.01855$ & $0.01313$ \\
& $n=3$ & $0.06250$ & $0.02559$ & $0.01733$ \\
& $n=4$ & $0.09609$ & $0.04482$ & $0.02412$ \\
\end{tabular}
\caption{TV distance for combined angle quantization ($b$ bits) and shot noise
(\texttt{AerSimulator} \texttt{qasm\_simulator}, mean over $10$ runs).
Results for $b=16$ are numerically indistinguishable from $b=32$; the small
excess visible for $b=8$ at low shot counts reflects the angle-quantization
contribution $\mathrm{TV}(p,\tilde{p})\lesssim 4\times 10^{-3}$.}
\label{tab:tv_combined}
\end{table}

The combined results confirm that \emph{shot noise dominates} at practically relevant shot
counts: for $S\ge 256$, the TV distance with $b=8$ bits differs from the
corresponding value with $b=32$ bits by at most $0.00117$ in absolute terms
(e.g.\ $0.03594$ vs $0.03477$ for $n=2$, $S=256$), while the
shot-noise contribution alone ranges from $0.01311$ to $0.09648$ across the $n\in\{2,3,4\}$,
$S\in\{256,4096\}$ combinations.
In other words, for the triangle density with $n\le 4$ qubits, even $8$-bit angle
quantization introduces an angle-precision error (${\lesssim}\,4\times10^{-3}$ in TV) that is at least
one order of magnitude smaller than the shot-noise contribution at $S=256$.
This suggests that in practice, moderate angle precision ($b=8$ to $16$ bits) is entirely
sufficient, and the dominant cost of increasing $n$ is the number of shots $S=\Omega(2^n/\varepsilon^2)$
needed to achieve a given TV accuracy $\varepsilon$.

The following corollary combines Theorem~\ref{thm:gr_angle_stability} with the
Hoeffding bound of Remark~\ref{rem:hoeffding_tv} to give a single closed-form
guarantee on the total TV error from both sources simultaneously.

\begin{corollary}[Combined angle-quantization and shot-noise bound]
\label{cor:combined_bound}
Let $n\ge 1$, $\delta\in(0,1)$, and $\varepsilon\in(0,1]$.
Suppose each Grover--Rudolph angle $\theta_{\mathbf{w}}\in[0,\pi/2]$ is implemented
as a physical $\RY(2\widetilde\theta_{\mathbf{w}})$ gate, with
$|\widetilde\theta_{\mathbf{w}}-\theta_{\mathbf{w}}|\le\eta$ for some $\eta\ge 0$.
Let $\widetilde p$ denote the output distribution of the perturbed circuit, and let
$\hat p$ be the empirical frequency distribution obtained from $S$ independent
measurements of the perturbed circuit.
Then with probability at least $1-\delta$,
\begin{equation}\label{eq:combined_bound}
\mathrm{TV}(p,\hat p)
\;\le\;
\min\!\left(1,\;
n\eta
+
\sqrt{\frac{2^n\log(2/\delta)}{2S}}\right).
\end{equation}
In particular, if the angles are quantised to a uniform $b$-bit grid on $[0,\pi/2]$
(grid step $\pi/2^b$, maximum error $\eta=\pi/2^{b+1}$), then with probability at least $1-\delta$,
\begin{equation}\label{eq:combined_bound_bits}
\mathrm{TV}(p,\hat p)
\;\le\;
\min\!\left(1,\;
\frac{n\pi}{2^{b+1}}
+
\sqrt{\frac{2^n\log(2/\delta)}{2S}}\right).
\end{equation}
Consequently, for $\varepsilon\in(0,1]$ and $\delta\in(0,1)$, to achieve $\mathrm{TV}(p,\hat p)\le\varepsilon$ with confidence $1-\delta$,
it suffices to choose $b$ and $S$ such that
\begin{equation}\label{eq:design_rule}
\frac{n\pi}{2^{b+1}}\le\frac{\varepsilon}{2}
\qquad\text{and}\qquad
S\;\ge\;\frac{2^n\log(2/\delta)}{2(\varepsilon/2)^2}
=
\frac{2^{n+1}\log(2/\delta)}{\varepsilon^2}.
\end{equation}
The first condition requires $b\ge\log_2(2n\pi/\varepsilon)$, i.e.\ the precision
$b$ need only grow \emph{logarithmically} in $n$ and $1/\varepsilon$.
The second condition shows that the required shot count grows \emph{linearly}
in $N=2^n$ (the number of outcomes) for fixed $\varepsilon$ and $\delta$.
\end{corollary}

\begin{proof}
By the triangle inequality for total variation,
\[
\mathrm{TV}(p,\hat p)
\le
\mathrm{TV}(p,\widetilde p)
+
\mathrm{TV}(\widetilde p,\hat p).
\]
Theorem~\ref{thm:gr_angle_stability} gives
$\mathrm{TV}(p,\widetilde p)\le n\eta$.
The empirical distribution $\hat p$ is drawn from $S$ independent samples of
the perturbed circuit whose output distribution is $\widetilde p$; applying
Remark~\ref{rem:hoeffding_tv} with $\widetilde p$ in place of $p$ gives
$\mathrm{TV}(\widetilde p,\hat p)\le\sqrt{2^n\log(2/\delta)/(2S)}$ with
probability at least $1-\delta$.
Combining the two bounds yields~\eqref{eq:combined_bound}.
Inequality~\eqref{eq:combined_bound_bits} follows by substituting $\eta=\pi/2^{b+1}$.
The design rule~\eqref{eq:design_rule} is obtained by splitting the error budget
equally between the two terms and inverting each inequality.
\end{proof}

\begin{remark}[Separation of deterministic and stochastic errors]
\label{rem:separation}
Corollary~\ref{cor:combined_bound} makes explicit the qualitatively different
scaling of the two error sources.
The angle-precision term $n\eta$ is \emph{deterministic}, independent of $S$,
and decreases \emph{exponentially} in the bit depth $b$.
The shot-noise term $\sqrt{2^n\log(2/\delta)/(2S)}$ is \emph{stochastic},
independent of $b$, and decreases only as $1/\sqrt{S}$.
As a consequence, once $b$ is large enough that $n\pi/2^{b+1}$ is negligible
relative to $\varepsilon$, increasing $b$ further provides no benefit: the bottleneck
is entirely the shot budget.
The numerical experiments of Tables~\ref{tab:tv_angles}--\ref{tab:tv_combined}
confirm this picture for $n\in\{2,3,4\}$: already at $b=8$ bits the
angle-precision error (${\lesssim}\,4\times10^{-3}$) is dominated by the shot-noise
contribution at any practical shot count $S\le 4096$.
\end{remark}

\section{Ancilla-free transpilation of the Grover--Rudolph circuit in a $\{\RY(\cdot),X,\CNOT(\cdot\to\cdot)\}$ gate dictionary}
\label{sec:transpilation_triangulum}

In practical implementations, the circuit $\rU$ in
Theorem~\ref{theorem:Grover-Rudolph} must be expressed over a fixed elementary gate dictionary.
The choice of dictionary depends on the target hardware; here we focus on a gate set that is
widely supported by current quantum processors and quantum-computing simulators.
Throughout this section we work with
\[
\mathcal{G}:=\{\RY(\cdot),\,X,\,\CNOT(\cdot\to\cdot)\},
\]
where $\CNOT(c\to t)$ denotes the standard two-qubit controlled-NOT (i.e.\ a controlled-$X$ on the target $t$
conditioned on the control $c$ being in state $\ket{1}$).

\subsection{Active-register viewpoint and target gates}\label{subsec:transpilation_active}

At Grover--Rudolph stage $j\in\{2,\dots,n\}$ the construction acts nontrivially only on the last $j$ qubits of the
$n$-qubit register. We therefore introduce the \emph{active $j$-qubit register}
\[
(q_1,\dots,q_j):=(n-j+1,\;n-j+2,\;\dots,\;n),
\qquad\text{i.e.}\qquad q_r:=n-j+r,\ \ r=1,\dots,j,
\]
so that any $G\in\mathrm U(2^j)$ acting on the active register is embedded into $\mathcal H_n=(\mathbb C^2)^{\otimes n}$
as $\rI_{2^{n-j}}\otimes G\in\mathrm U(2^n)$.

For $\mathbf{w}\in\mathbb Z_2^{j-1}$ we write
\[
\CC_{\mathbf{w}}^{(1)}\bigl(\rR(\theta_{\mathbf{w}})\bigr)\in\mathrm U(2^j)
\]
for the pattern-controlled one-qubit rotation acting on the \emph{target} $q_1$ conditioned on the control register
$(q_2,\dots,q_j)$ being equal to $\mathbf{w}$, and acting as the identity on all other computational basis states.

\paragraph{Dictionary gates.}
For a qubit $q$ we denote by $X_q$ the Pauli-$X$ gate on $q$, and by $\RY(\alpha)_q$ the $y$-rotation
\[
\RY(\alpha)=\exp\!\Bigl(-\frac{i\alpha}{2}Y\Bigr)
=
\begin{pmatrix}
\cos(\alpha/2) & -\sin(\alpha/2)\\
\sin(\alpha/2) & \cos(\alpha/2)
\end{pmatrix}.
\]
For qubits $c$ (control) and $t$ (target), the standard $\CNOT(c\to t)$ is defined by
\[
\CNOT(c\to t)\ket{x}_c\ket{y}_t=\ket{x}_c\ket{y\oplus x}_t,
\qquad x,y\in\mathbb Z_2,
\]
and acts as the identity on all other qubits.

\paragraph{CNOT in terms of fully pattern-controlled $X$.}
If one wishes to express $\CNOT$ using the manuscript's fully pattern-controlled primitives
$\CC^{(1)}_{v}(X)\in\mathrm U(2^j)$ (with $v\in\mathbb Z_2^{j-1}$ specifying the entire control word on $(q_2,\dots,q_j)$),
then for $r\in\{1,\dots,j-1\}$ one has
\begin{equation}\label{eq:CNOT_as_product_of_CC}
\CNOT(q_{r+1}\to q_1)
=
\prod_{u\in\mathbb{Z}_2^{j-2}}
\CC^{(1)}_{\,u_1\cdots u_{r-1}\,1\,u_r\cdots u_{j-2}}(X).
\end{equation}
The factors commute because they act on pairwise orthogonal control subspaces, and exactly one factor is active on any
branch with the $r$-th control bit equal to $1$.

\subsection{Stage-as-$\UCRY$ compilation and Gray-code ladder}\label{subsec:transpilation_compilation}

\paragraph{Rotation primitive.}
Recall that
\[
\rR(\alpha)=
\begin{pmatrix}
\cos\alpha & -\sin\alpha\\
\sin\alpha & \cos\alpha
\end{pmatrix}
=\RY(2\alpha),
\]
so at stage $j$ it suffices to compile controlled $\RY$ rotations with angles $\phi_{\mathbf{w}}:=2\theta_{\mathbf{w}}$.

\paragraph{Uniformly controlled viewpoint.}
Define the uniformly controlled $y$-rotation on the active register by
\[
\UCRY(\bm{\phi})
:=\prod_{\mathbf{w}\in\mathbb{Z}_2^{j-1}} \CC_{\mathbf{w}}^{(1)}\!\bigl(\RY(\phi_{\mathbf{w}})\bigr)\ \in\ \mathrm{U}(2^j),
\qquad
\bm{\phi}=\{\phi_{\mathbf{w}}\}_{\mathbf{w}\in\mathbb Z_2^{j-1}}.
\]
With $\phi_{\mathbf{w}}:=2\theta_{\mathbf{w}}$, the active stage unitary is precisely
\[
U_j^{\mathrm{(act)}}=\UCRY(\{2\theta_{\mathbf{w}}\}_{\mathbf{w}\in\mathbb Z_2^{j-1}}),
\qquad
\text{and the full-$n$ embedding is }\ \rI_{2^{n-j}}\otimes U_j^{\mathrm{(act)}}\in\mathrm U(2^n).
\]

\paragraph{Gray-code conventions.} Let $(\mathbf{g}_k)_{k=0}^{2^m-1}$ be the standard (binary-reflected) Gray code on $\mathbb{Z}_2^m$.
For $k\in\mathbb{Z}_{2^m}$ define
\[
\mathbf{g}_k:=b_m(k)\oplus b_m\!\left(\Big\lfloor\frac{k}{2}\Big\rfloor\right)\in\mathbb{Z}_2^m,
\qquad
\gamma(k):=b_m^{-1}(\mathbf{g}_k)\in\mathbb{Z}_{2^m},
\]
where $\oplus$ denotes componentwise XOR on $\mathbb{Z}_2^m$ and $b_m^{-1}$ is the inverse map
that sends a word $z_1\cdots z_m$ to the integer $\sum_{r=1}^m z_r2^{r-1}$.
Then consecutive words differ in exactly one bit.
For $k\ge 1$ set
\[
d_k:=\gamma(k)\oplus \gamma(k-1)\in\mathbb{Z}_{2^m},
\]
where $\oplus$ now denotes bitwise XOR on integers; in particular $d_k$ is a power of two.
Let $j(k)\in\{1,\dots,m\}$ be the unique index such that $d_k=2^{\,j(k)-1}$.
Because $b_m^{-1}$ uses the LSB-first encoding ($b_m^{-1}(z_1\cdots z_m)=\sum_r z_r 2^{r-1}$),
a flip in bit position $r$ of the word $\mathbf{g}_{k-1}\to\mathbf{g}_k$ contributes $\pm 2^{r-1}$ to
$\gamma(k)-\gamma(k-1)$, so $d_k=2^{r-1}$ and thus $j(k)=r$.
Equivalently, $j(k)$ is the unique bit position $r\in\{1,\dots,m\}$ that flips from $\mathbf{g}_{k-1}$ to $\mathbf{g}_k$,
where positions are counted from $z_1$ (LSB) to $z_m$ (MSB).

\begin{example}[Binary-reflected Gray code for $m=3$]\label{ex:graycode_m3}
We illustrate the binary-reflected Gray code using the paper's convention
\[
b_m:\mathbb{Z}_{2^m} \to \mathbb{Z}_2^m,\qquad
b_m(k)=z_1z_2\cdots z_m\ \Longleftrightarrow\ k=\sum_{r=1}^m z_r\,2^{r-1},
\]
(i.e.\ $z_1$ is the least significant bit). For $m=3$ (so $2^m=8$) and $k=0,\dots,7$, define the Gray word
\[
\mathbf{g}_k:= b_3(k)\oplus b_3\!\left(\Big\lfloor\frac{k}{2}\Big\rfloor\right)\in\mathbb{Z}_2^3,
\qquad
\gamma(k):=b_3^{-1}(\mathbf{g}_k)\in\{0,\dots,7\},
\]
where $\oplus$ denotes componentwise XOR on $\mathbb{Z}_2^3$ and $b_3^{-1}$ is the inverse map.

The table below lists $b_3(k)$, $b_3(\lfloor k/2\rfloor)$, $\mathbf{g}_k$,
the integer $\gamma(k)=b_3^{-1}(\mathbf{g}_k)$ computed with the LSB-first convention
$b_3^{-1}(z_1z_2z_3)=z_1+2z_2+4z_3$, and the flip mask $d_k=\gamma(k)\oplus\gamma(k-1)$
(integer XOR, undefined for $k=0$):

\medskip
\begin{center}
\begin{tabular}{c|c|c|c|c|c}
$k$ & $b_3(k)$ & $b_3(\lfloor k/2\rfloor)$ & $\mathbf{g}_k$ & $\gamma(k)$ & $d_k$\\
\hline
$0$ & $000$ & $000$ & $000$ & $0$ & $-$\\
$1$ & $100$ & $000$ & $100$ & $1$ & $1$\\
$2$ & $010$ & $100$ & $110$ & $3$ & $2$\\
$3$ & $110$ & $100$ & $010$ & $2$ & $1$\\
$4$ & $001$ & $010$ & $011$ & $6$ & $4$\\
$5$ & $101$ & $010$ & $111$ & $7$ & $1$\\
$6$ & $011$ & $110$ & $101$ & $5$ & $2$\\
$7$ & $111$ & $110$ & $001$ & $4$ & $1$
\end{tabular}
\end{center}
\medskip

Thus, in the paper's bit ordering, the Gray-code sequence is
\[
\mathbf{g}_0=000,\;
\mathbf{g}_1=100,\;
\mathbf{g}_2=110,\;
\mathbf{g}_3=010,\;
\mathbf{g}_4=011,\;
\mathbf{g}_5=111,\;
\mathbf{g}_6=101,\;
\mathbf{g}_7=001,
\]
and consecutive words differ in exactly one bit. For $k\ge 1$, define the (integer) flip mask
\[
d_k:=\gamma(k)\oplus \gamma(k-1)\in\{1,2,4\},
\]
where $\oplus$ now denotes bitwise XOR on integers.
Since $\gamma(k)$ is computed with the LSB-first encoding ($b_3^{-1}(z_1z_2z_3)=z_1+2z_2+4z_3$),
the bit $z_r$ contributes $2^{r-1}$ to $\gamma$, so a flip in position $r$ produces $d_k=2^{r-1}$.
In this example one obtains
\[
(d_1,\dots,d_7)=(1,2,1,4,1,2,1),
\]
so each $d_k$ is a power of two. The flip index $j(k)\in\{1,2,3\}$ is the unique integer such that
$d_k=2^{\,j(k)-1}$, hence
\[
(j(1),\dots,j(7))=(1,2,1,3,1,2,1).
\]
Equivalently, $j(k)$ records the unique bit position (in the word $z_1z_2z_3$) that flips from $\mathbf{g}_{k-1}$ to $\mathbf{g}_k$.
For instance,
\[
\mathbf{g}_1=100 \longrightarrow \mathbf{g}_2=110
\]
flips $z_2$ (the second bit), which corresponds to $d_2=\gamma(2)\oplus\gamma(1)=3\oplus 1=2=2^{2-1}$
and thus $j(2)=2$.
\end{example}

\begin{proposition}[Gray-code ladder for a uniformly controlled $\RY$ gate in $\mathrm{U}(2^j)$]
\label{prop:ucr_gray_ladder}
Let $j\ge 2$ and consider the active register $(q_1,\dots,q_j)$, where $q_1$ is the target and
$C=(q_2,\dots,q_j)$ are the $m:=j-1$ control qubits.
Given an angle list $\bm{\phi}=\{\phi_{\mathbf{w}}\}_{\mathbf{w}\in\mathbb{Z}_2^{m}}$, define the uniformly controlled rotation
\begin{equation}\label{eq:ucr_def}
\UCRY(\bm{\phi})
:=\prod_{\mathbf{w}\in\mathbb{Z}_2^{m}} \CC_{\mathbf{w}}^{(1)}\!\bigl(\RY(\phi_{\mathbf{w}})\bigr)\ \in\ \mathrm{U}(2^j),
\end{equation}
i.e.\ $\UCRY(\bm{\phi})$ applies $\RY(\phi_{\mathbf{w}})$ to $q_1$ conditioned on the control register being $\ket{\mathbf{w}}$.

Let $(\mathbf{g}_k)_{k=0}^{2^{m}-1}$ be the standard (binary-reflected) Gray code on $\mathbb{Z}_2^m$ and let
$j(k)\in\{1,\dots,m\}$ denote the unique bit that flips from $\mathbf{g}_{k-1}$ to $\mathbf{g}_k$.
Define a new angle list $\bm{\alpha}=\{\alpha_{\mathbf{v}}\}_{\mathbf{v}\in\mathbb{Z}_2^{m}}$ by the Walsh--Hadamard transform
\begin{equation}\label{eq:alpha_from_phi}
\alpha_{\mathbf{v}}
:=\frac{1}{2^{m}}\sum_{\mathbf{w}\in\mathbb{Z}_2^{m}}(-1)^{\langle \mathbf{v},\mathbf{w}\rangle}\,\phi_{\mathbf{w}},
\qquad \mathbf{v}\in\mathbb{Z}_2^{m},
\end{equation}
where $\langle \mathbf{v},\mathbf{w}\rangle:=\sum_{r=1}^{m} v_r w_r \pmod 2$.
Then $\UCRY(\bm{\phi})$ admits the ancilla-free Gray-code ladder decomposition (cf.~\cite{Bergholm2005})
\begin{equation}\label{eq:ucr_ladder}
\UCRY(\bm{\phi})
=
\RY\!\bigl(\alpha_{\mathbf{g}_0}\bigr)_{q_1}\;
\prod_{k=1}^{2^{m}-1}
\Bigl[\CNOT\!\bigl(q_{1+j(k)}\to q_1\bigr)\;\RY\!\bigl(\alpha_{\mathbf{g}_k}\bigr)_{q_1}\Bigr].
\end{equation}
In particular, the circuit uses exactly $2^{m}$ single-qubit $\RY$ gates and $2^{m}-1$ $\CNOT$ gates and requires no ancillas.
\end{proposition}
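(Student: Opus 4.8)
The plan is to verify the claimed operator identity \emph{branch by branch} over the control register and thereby reduce it to a single-qubit statement. First I would observe that every factor on the right-hand side either acts on the target $q_1$ alone (the $\RY$ gates) or is a $\CNOT$ whose control lies in $C=(q_2,\dots,q_j)$ and whose target is $q_1$; in particular none of them alters the control qubits. Hence both $\UCRY(\bm{\phi})$ and the ladder preserve each control branch $(\mathbb{C}^2)_{q_1}\otimes\ket{w}$, $w=w_1\cdots w_m\in\mathbb{Z}_2^{m}$, so it suffices to fix $w$ and compare the two induced single-qubit operators. On this branch $\CC_{w}^{(1)}(\RY(\phi_w))$ reduces to $\RY(\phi_w)$ on $q_1$, while $\CNOT(q_{1+j(k)}\to q_1)$ reduces to $X^{\,w_{j(k)}}$ on $q_1$. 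The goal becomes the single-qubit identity
\[
\RY(\alpha_{\mathbf{g}_0})\,\prod_{k=1}^{2^{m}-1}\Bigl(X^{\,w_{j(k)}}\,\RY(\alpha_{\mathbf{g}_k})\Bigr)=\RY(\phi_w).
\]

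Next I would sweep all the $X$ factors to one end using the single relation $X\,\RY(\alpha)\,X=\RY(-\alpha)$ (equivalently $XYX=-Y$), which lets each $X$ pass a neighbouring $\RY$ at the cost of flipping that rotation's angle; since $\RY(a)\RY(b)=\RY(a+b)$, collecting the rotations leaves a net single-qubit rotation times a residual power of $X$. The sign multiplying $\alpha_{\mathbf{g}_k}$ is $(-1)^{s_k}$ with $s_k:=\bigoplus_{l=1}^{k}w_{j(l)}$ and $s_0:=0$. The key combinatorial input is that, because $\mathbf{g}_0=0$ and each Gray step flips exactly the bit $j(k)$, one has the telescoping relation $\mathbf{g}_k=\bigoplus_{l=1}^{k}e_{j(l)}$, whence $s_k=\langle \mathbf{g}_k,w\rangle$. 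Thus the accumulated rotation angle on branch $w$ equals $\sum_{k=0}^{2^{m}-1}(-1)^{\langle\mathbf{g}_k,w\rangle}\alpha_{\mathbf{g}_k}$.

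The rotation part is then closed by Fourier inversion on $\mathbb{Z}_2^m$. Since $k\mapsto\mathbf{g}_k$ is a bijection of $\{0,\dots,2^m-1\}$ onto $\mathbb{Z}_2^m$, I would reindex the sum over $v\in\mathbb{Z}_2^m$, substitute the Walsh--Hadamard definition $\alpha_v=2^{-m}\sum_{u}(-1)^{\langle v,u\rangle}\phi_u$, and invoke character orthogonality $\sum_{v\in\mathbb{Z}_2^m}(-1)^{\langle v,w\oplus u\rangle}=2^{m}\,[\,w=u\,]$ to collapse the double sum to $\phi_w$. This shows the net rotation produced on every branch is exactly $\RY(\phi_w)$, matching $\UCRY(\bm{\phi})$.

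The step I expect to be the main obstacle is controlling the \emph{residual} $X$ left over by the sweep: the same bookkeeping produces a leftover factor $X^{\,s_{2^m-1}}=X^{\langle\mathbf{g}_{2^m-1},w\rangle}$ on $q_1$, and the operator identity can hold on every branch only if this residual is trivial for all $w$. This forces a careful analysis of the \emph{endpoint} of the binary-reflected Gray path: for the standard code one computes $\mathbf{g}_{2^m-1}=e_m$ (the single top control bit), so the residual is $X^{w_m}$, and verifying that it cancels — equivalently, that the accumulated $\CNOT$ parity on the target returns to zero on every branch — is precisely the delicate point. It is exactly this endpoint/parity computation that pins down the correct number and placement of the $\CNOT$s (the closure of the Gray cycle on the top control wire $q_{1+m}$). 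I would therefore treat the Gray-code endpoint analysis as the crux of the argument, the commutation and Fourier-inversion steps being routine once it is settled.
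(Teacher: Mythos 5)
Your strategy is exactly the paper's: restrict to a control branch $\ket{w}$, sweep the $X$'s through the $\RY$'s via $X\RY(\alpha)X=\RY(-\alpha)$, identify the accumulated sign at step $k$ with $(-1)^{\langle \mathbf{g}_k,w\rangle}$ through the telescoping relation $\mathbf{g}_k=\bigoplus_{l\le k}e_{j(l)}$, and close with Walsh--Hadamard inversion and character orthogonality. All of that is sound. But the point you flag as the ``main obstacle'' --- the residual toggle $X^{\langle\mathbf{g}_{2^m-1},w\rangle}=X^{w_m}$ --- is not merely delicate: it does \emph{not} cancel, and the identity \eqref{eq:ucr_ladder} as stated is false. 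A clean way to see this is by determinants: on a branch $w$ for which an odd number of the $\CNOT$s fire, the induced operator on $q_1$ is a product of $\RY$'s (determinant $+1$) and an odd number of $X$'s (determinant $-1$), hence has determinant $-1$ and cannot equal the pure rotation $\RY(\phi_w)$. Since the ladder contains $2^m-1$ $\CNOT$s in total, an odd number, some control wire carries an odd number of them (the flip counts along the binary-reflected walk are $2^{m-1},2^{m-2},\dots,2,1$, exactly one of which is odd), so such a branch always exists. Already for $m=1$ the ladder $\RY(\alpha_{\mathbf{g}_0})\,\CNOT\,\RY(\alpha_{\mathbf{g}_1})$ produces $\RY(\phi_1)X$, not $\RY(\phi_1)$, on the branch $w=1$.

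The paper's own proof disposes of the residual by asserting that ``each control index $r$ appears exactly $2^{m-1}$ times among the flips $j(1),\dots,j(2^m-1)$,'' which is arithmetically impossible, since $m\cdot 2^{m-1}>2^m-1$ for $m\ge 2$. So your refusal to take the cancellation for granted is the correct instinct, and your parenthetical remark about ``the closure of the Gray cycle on the top control wire'' is precisely the needed repair: the standard decomposition of Bergholm et al.\ appends a final $\CNOT(q_{1+m}\to q_1)$ corresponding to the flip from $\mathbf{g}_{2^m-1}$ back to $\mathbf{g}_0$. This makes every flip count even, kills the residual $X^{w_m}$ on every branch without changing the accumulated angle, and brings the $\CNOT$ count to $2^m$ rather than $2^m-1$. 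With that one extra gate inserted (and the gate count in the statement corrected accordingly), your argument goes through verbatim and is complete; without it, no choice of angles $\bm{\alpha}$ can make \eqref{eq:ucr_ladder} hold.
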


\begin{proof}
\noindent\textbf{Gray-walk viewpoint.}
Fix a control word $\mathbf{u}=u_1\cdots u_m\in\mathbb{Z}_2^m$ and restrict the Gray-code
ladder circuit~\eqref{eq:ucr_ladder} to the branch $\mathbb{C}^2\otimes\ket{\mathbf{u}}$, where
$\mathbb{C}^2$ refers to the target qubit $q_1$ and $\ket{\mathbf{u}}$ to the control register
$(q_2,\dots,q_j)$.
On this branch, each $\CNOT(q_{1+j(k)}\to q_1)$ applies $X$ to the target if and only if the
corresponding control bit equals~$1$, hence it acts as $X^{u_{j(k)}}$ on $q_1$.
Therefore the branch-restricted unitary reads
\[
U_{\mathbf{u}}
=
\RY\!\bigl(\alpha_{\mathbf{g}_0}\bigr)
\prod_{k=1}^{2^m-1}\Bigl(X^{u_{j(k)}}\,\RY\!\bigl(\alpha_{\mathbf{g}_k}\bigr)\Bigr),
\]
where we omit the target subscript $q_1$ for readability.

Let $S_k(\mathbf u):=\sum_{\ell=1}^{k} u_{j(\ell)}\pmod 2$ be the parity of the number of $X$ toggles
applied up to step $k$ on this branch. Using the elementary conjugation identity
\[
X\,\RY(\alpha)\,X=\RY(-\alpha)\qquad(\alpha\in\mathbb{R}),
\]
we can rewrite each factor $\RY(\alpha_{\mathbf{g}_k})$ as conjugated by the cumulative toggle $X^{S_{k-1}(\mathbf u)}$,
so that the effective angle at step $k$ acquires a sign
\[
\RY(\alpha_{\mathbf{g}_k})
\;\longmapsto\;
\RY\!\bigl((-1)^{S_{k-1}(\mathbf u)}\,\alpha_{\mathbf{g}_k}\bigr).
\]
For the binary-reflected Gray code, the parity $S_{k-1}(\mathbf{u})$ equals the $\mathbb{Z}_2$ inner product
$\langle \mathbf{u},\mathbf{g}_k\rangle:=\sum_{r=1}^m u_r (\mathbf{g}_k)_r\pmod 2$, hence the branch action simplifies to
\[
U_{\mathbf{u}}
=
\RY\!\Bigl(\sum_{k=0}^{2^m-1} (-1)^{\langle \mathbf{u},\mathbf{g}_k\rangle}\,\alpha_{\mathbf{g}_k}\Bigr)\,X^{S_{2^m-1}(\mathbf{u})}.
\]
Finally, along the Gray-code ladder each control index $r\in\{1,\dots,m\}$ appears exactly $2^{m-1}$ times
among the flips $j(1),\dots,j(2^m-1)$. Therefore
\[
S_{2^m-1}(\mathbf{u})=\sum_{r=1}^m u_r\,2^{m-1}\equiv 0\pmod 2
\qquad (m\ge 2),
\]
so the net trailing toggle is $X^{S_{2^m-1}(\mathbf{u})}=\rI$ and no residual Pauli factor remains.
Consequently, on each branch $\mathbf{u}$ the ladder realizes a \emph{pure} $\RY$ rotation with angle
$\sum_{k}(-1)^{\langle \mathbf{u},\mathbf{g}_k\rangle}\alpha_{\mathbf{g}_k}$, which is precisely the uniformly controlled action
once $\bm{\alpha}$ is chosen as the Walsh--Hadamard transform of $\bm{\phi}$ in \eqref{eq:alpha_from_phi}.

\smallskip
\noindent\textbf{Evaluation of the effective branch angle (Walsh--Hadamard inversion).}
From the Gray-walk analysis above (and the fact that the trailing toggle cancels for $m\ge2$), the restriction of
the ladder circuit \eqref{eq:ucr_ladder} to the control branch $\mathbb{C}^2 \otimes \ket{\mathbf{u}}$ is a pure rotation
\[
U_{\mathbf{u}}=\RY\!\bigl(\widetilde{\phi}(\mathbf{u})\bigr),
\qquad
\widetilde{\phi}(\mathbf{u}):=\sum_{k=0}^{2^m-1}(-1)^{\langle \mathbf{u},\mathbf{g}_k\rangle}\,\alpha_{\mathbf{g}_k},
\]
(for $m=1$ the ladder reduces to the standard two-gate controlled-$\RY$ synthesis and the same conclusion holds).
Since the Gray code $(\mathbf{g}_k)_{k=0}^{2^m-1}$ enumerates $\mathbb{Z}_2^m$ exactly once, we can re-index the sum by
$\mathbf{v}\in\mathbb{Z}_2^m$ and write
\[
\widetilde{\phi}(\mathbf{u})
=
\sum_{\mathbf{v}\in\mathbb{Z}_2^m}(-1)^{\langle \mathbf{u},\mathbf{v}\rangle}\,\alpha_{\mathbf{v}}.
\]
Using the definition \eqref{eq:alpha_from_phi},
\[
\alpha_{\mathbf{v}}
=\frac{1}{2^m}\sum_{\mathbf{w}\in\mathbb{Z}_2^m}(-1)^{\langle \mathbf{v},\mathbf{w}\rangle}\,\phi_{\mathbf{w}},
\]
we obtain, by exchanging sums,
\[
\widetilde{\phi}(\mathbf{u})
=\frac{1}{2^m}\sum_{\mathbf{w}\in\mathbb{Z}_2^m}\phi_{\mathbf{w}}
\sum_{\mathbf{v}\in\mathbb{Z}_2^m}(-1)^{\langle \mathbf{u}+\mathbf{w},\mathbf{v}\rangle}.
\]
By orthogonality of characters of the group $(\mathbb{Z}_2^m,+)$,
\[
\sum_{\mathbf{v}\in\mathbb{Z}_2^m}(-1)^{\langle \mathbf{a},\mathbf{v}\rangle}
=
\begin{cases}
2^m, & \mathbf{a}=\mathbf{0},\\
0, & \mathbf{a}\neq \mathbf{0},
\end{cases}
\]
hence only the term $\mathbf{w}=\mathbf{u}$ survives and $\widetilde{\phi}(\mathbf{u})=\phi_{\mathbf{u}}$.
Therefore $U_{\mathbf{u}}=\RY(\phi_{\mathbf{u}})$ on every branch $\mathbf{u}$, which is exactly the defining action of
$\UCRY(\bm{\phi})$ in \eqref{eq:ucr_def}.
\end{proof}

\paragraph{Conclusion: transpiling the Grover--Rudolph stage unitaries.}
At Grover--Rudolph stage $j\in\{2,\dots,n\}$ the circuit applies, on the active $j$-qubit register,
the product of pattern-controlled rotations
\[
U_j^{\mathrm{(act)}}=\prod_{\mathbf{w}\in\mathbb{Z}_2^{j-1}}\CC_{\mathbf{w}}^{(1)}\!\bigl(\rR(\theta_{\mathbf{w}})\bigr)
=\prod_{\mathbf{w}\in\mathbb{Z}_2^{j-1}}\CC_{\mathbf{w}}^{(1)}\!\bigl(\RY(2\theta_{\mathbf{w}})\bigr).
\]
Equivalently, $U_j^{\mathrm{(act)}}$ is a uniformly controlled $\RY$ gate with angle list
$\bm{\phi}=\{\phi_{\mathbf{w}}\}_{\mathbf{w}\in\mathbb{Z}_2^{j-1}}$ given by $\phi_{\mathbf{w}}:=2\theta_{\mathbf{w}}$.
Proposition~\ref{prop:ucr_gray_ladder} then yields an explicit ancilla-free Gray-code ladder realization of
$U_j^{\mathrm{(act)}}$ over the gate dictionary
$\mathcal{G}=\{X,\RY(\cdot),\CNOT(\cdot\to\cdot)\}$, using exactly $2^{j-1}$ $\RY$ gates and $2^{j-1}-1$ $\CNOT$ gates.
When embedded into the full $n$-qubit space, this acts as $\rI_{2^{n-j}}\otimes U_j^{\mathrm{(act)}}\in\mathrm{U}(2^n)$
(see Figure~\ref{fig:graycode_ladder_generic} for the ladder pattern).

\begin{remark}[practical compilation guideline]\label{rem:compile_stage}
In implementations, it is typically advantageous to transpile each Grover--Rudolph stage as a single
uniformly controlled $\RY$ gate (multiplexor) rather than transpiling the $2^{j-1}$ pattern-controlled factors
individually; this avoids redundant synthesis and directly yields the Gray-code ladder structure of
Proposition~\ref{prop:ucr_gray_ladder} (cf.~\cite{Bergholm2005}).
\end{remark}

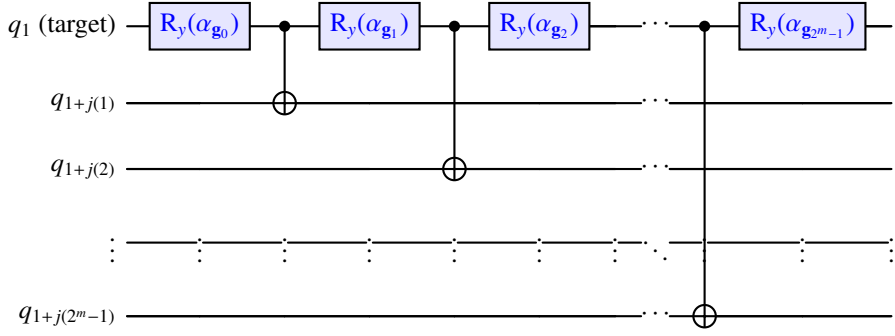
\begin{figure}[t]
\centering
\begin{quantikz}[row sep=0.55cm, column sep=0.30cm, wire types={q}]
\lstick{$q_1$ (target)} &
\gate[style={fill=blue!10},label style={blue}]{\RY(\alpha_{\mathbf{g}_0})} &
\ctrl{1} & \gate[style={fill=blue!10},label style={blue}]{\RY(\alpha_{\mathbf{g}_1})} &
\ctrl{2} & \gate[style={fill=blue!10},label style={blue}]{\RY(\alpha_{\mathbf{g}_2})} &
\qw & \cdots &
\ctrl{4} & \gate[style={fill=blue!10},label style={blue}]{\RY(\alpha_{\mathbf{g}_{2^{m}-1}})} &
\qw \\
\lstick{$q_{1+j(1)}$} &
\qw &
\targ{} & \qw &
\qw & \qw &
\qw & \cdots &
\qw & \qw &
\qw \\
\lstick{$q_{1+j(2)}$} &
\qw &
\qw & \qw &
\targ{} & \qw &
\qw & \cdots &
\qw & \qw &
\qw \\
\lstick{$\vdots$} &
\vdots &
\vdots & \vdots &
\vdots & \vdots &
\vdots & \ddots &
\vdots & \vdots &
\vdots \\
\lstick{$q_{1+j(2^{m}-1)}$} &
\qw &
\qw & \qw &
\qw & \qw &
\qw & \cdots &
\targ{} & \qw &
\qw \\
\end{quantikz}
\caption{Gray-code ladder realization of $\CC_{\mathbf 1}^{(1)}(\RY(\phi))$ on the active register $(q_1,\dots,q_j)$, where $m=j-1$.
At step $k$ the $\CNOT$ uses control $q_{1+j(k)}$ and target $q_1$, followed by $\RY(\alpha_{\mathbf{g}_k})$ on $q_1$.}
\label{fig:graycode_ladder_generic}
\end{figure}

\subsection*{Pseudo-code: ancilla-free Gray-code ladder compilation of a Grover--Rudolph stage as $\UCRY$}
\label{subsec:pseudocode_ucry}

\begin{algorithm}[H]
\caption{Ancilla-free Gray-code ladder compilation of $\UCRY(\bm{\phi})\in\mathrm{U}(2^j)$}
\label{alg:ucry_gray_ladder}
\KwIn{
Active register $(q_1,\dots,q_j)$ with target $q_1$ and controls $(q_2,\dots,q_j)$;\\
angle list $\bm{\phi}=\{\phi_{\mathbf{w}}\}_{\mathbf{w}\in\mathbb{Z}_2^{m}}$ with $m\leftarrow j-1$.
(In Grover--Rudolph stage $j$: $\phi_{\mathbf{w}} \leftarrow 2\theta_{\mathbf{w}}$.)
}
\KwOut{
A gate list over $\{\RY(\cdot),\CNOT(\cdot\to\cdot)\}$ implementing $\UCRY(\bm{\phi})$.
}

$m\leftarrow j-1$\;

\BlankLine
\textbf{(A) Compute Walsh--Hadamard angles $\bm{\alpha}$:}\;
\ForEach{$\mathbf{v}\in\mathbb{Z}_2^{m}$}{
    $\alpha_{\mathbf{v}} \leftarrow \dfrac{1}{2^{m}}\displaystyle\sum_{\mathbf{w}\in\mathbb{Z}_2^{m}} (-1)^{\langle \mathbf{v},\mathbf{w}\rangle}\,\phi_{\mathbf{w}}$\;
}
\tcp{In practice, compute $\bm{\alpha}$ with a fast Walsh--Hadamard transform in $O(m2^m)$.}

\BlankLine
\textbf{(B) Gray-code ladder order on the control space:}\;
\For{$k\leftarrow 0$ \KwTo $2^m-1$}{
    $\mathbf g_k \leftarrow b_m(k)\oplus b_m(\lfloor k/2\rfloor)\in\mathbb{Z}_2^m$\;
    $\gamma_k \leftarrow b_m^{-1}(\mathbf g_k)\in \mathbb{Z}_{2^m}$\;
    \If{$k=0$}{
        Apply $\RY(\alpha_{\mathbf g_0})$ on $q_1$\;
    }\Else{
        $d \leftarrow \gamma_k \oplus \gamma_{k-1}$ \tcp*{integer XOR; $d$ is a power of two}
        choose the unique $s\in\{1,\dots,m\}$ such that $d=2^{s-1}$\;
        Apply $\CNOT(q_{1+s}\to q_1)$\;
        Apply $\RY(\alpha_{\mathbf g_k})$ on $q_1$\;
    }
}

\end{algorithm}

\noindent\textbf{Note.}
For the Grover--Rudolph construction, set $\phi=2\theta_{\mathbf{w}}$, since $\rR(\theta_{\mathbf{w}})=\RY(2\theta_{\mathbf{w}})$.

\section*{Conclusions}

We have presented a rigorous and self-contained proof of correctness of the Grover--Rudolph state
preparation algorithm. The analysis makes explicit the dyadic refinement structure underlying the method,
derives a trigonometric factorization of the target probabilities in terms of conditional masses, and proves
by induction that the resulting hierarchy of controlled rotations produces exactly the desired measurement
law in the computational basis.

Beyond exact correctness, we established a quantitative stability theorem for imperfectly implemented
Grover--Rudolph angles (Theorem~\ref{thm:gr_angle_stability}).
The estimate shows that, if every Grover--Rudolph angle is perturbed by at most
\(\eta\), then the resulting output distribution differs from the target distribution by at most \(\min(1,n\eta)\)
in total variation distance. This linear-in-depth bound exploits the conditional binary-tree structure of
the construction. Combining this estimate with the Hoeffding concentration inequality,
Corollary~\ref{cor:combined_bound} gives a unified closed-form guarantee: for $\varepsilon\in(0,1]$, with probability at least
$1-\delta$,
\[
\mathrm{TV}(p,\hat p)\;\le\;\min\!\left(1,\;\frac{n\pi}{2^{b+1}}+\sqrt{\frac{2^n\log(2/\delta)}{2S}}\right),
\]
{ where $b$ is the angle bit depth and $S$ the number of measurement shots.
Inverting this bound yields an explicit design rule: $b\ge\log_2(2n\pi/\varepsilon)$ bits and
$S\ge 2^{n+1}\log(2/\delta)/\varepsilon^2$ shots suffice to achieve $\mathrm{TV}(p,\hat p)\le\varepsilon$
with confidence $1-\delta$. A key practical consequence (Remark~\ref{rem:separation}) is that the
required bit depth $b$ grows only \emph{logarithmically} in $n$ and $1/\varepsilon$, whereas the shot
budget $S$ must grow \emph{linearly} in $N=2^n$; the two error sources therefore have qualitatively
different scaling, and once $b$ is large enough, increasing it further provides no benefit.

The numerical experiments of Section~\ref{subsec:sensitivity} confirm these theoretical predictions
for the triangle density with $n\in\{2,3,4\}$ qubits: the angle-precision error at $b=8$ bits
is at most $4\times10^{-3}$ in TV, already dominated by shot noise at all tested shot counts
$S\in\{256,1024,4096\}$, consistently with the $O(\sqrt{2^n/S})$ rate of the Hoeffding bound.}

Finally, as a practical complement, we described an ancilla-free compilation viewpoint in which each
Grover--Rudolph stage is treated as a uniformly controlled one-qubit rotation, or multiplexor, admitting a
Gray-code ladder decomposition. This provides a direct route to gate-set transpilation on devices with a
native dictionary such as \(\{\RY(\cdot),X,\CNOT\}\), while remaining logically separate from the proof of the
Grover--Rudolph theorem.

Possible directions for future work include: (i) sharper \emph{distribution-dependent} stability
estimates that exploit cancellations in the dyadic tree or non-uniform angular error profiles,
beyond the worst-case uniform bound of Theorem~\ref{thm:gr_angle_stability};
(ii) hardware-aware cost models for compiling the stage multiplexors under connectivity constraints;
and (iii) extensions to structured families of distributions, such as log-concave or sparse
distributions, in the spirit of~\cite{Ramacciotti2024}, where additional synthesis savings
may be achievable.

\section*{Author contributions}
All authors contributed equally to this work.
Conceptualization, Methodology, Formal analysis, Investigation, Writing -- original draft,
Writing -- review \& editing: All authors.

\section*{Use of Generative-AI tools declaration}

The authors used generative-AI tools for language editing and/or formatting assistance. All scientific content,
derivations, and conclusions were produced and verified by the authors, who take full responsibility for the
manuscript.

\section*{Acknowledgments}

This work was supported by the Generalitat Valenciana under grant
COMCUANTICA/007 (QUANTWin), by the Agreement between the Directorate-General for Innovation of the
Ministry of Innovation, Industry, Trade and Tourism of the Generalitat Valenciana and the
Universidad CEU Cardenal Herrera, and by Universidad CEU Cardenal Herrera under grants INDI25/17 and GIR25/14.
The funders had no role in the study design; in the collection, analysis, or interpretation of data; in the writing
of the manuscript; or in the decision to publish the results.

\section*{Conflict of interest}
The authors declare that they have no conflict of interest.

\end{document}